\newtheorem{theorem}{Theorem}
\newtheorem{definition}{Definition}
\newtheorem{proposition}{Proposition}
\newtheorem{corollary}{Corollary}
\newtheorem{lemma}{Lemma}
\newtheorem{remark}{Remark}
\begin{document}
\title{Privacy-Utility Tradeoffs under Constrained Data Release Mechanisms}

\author{Ye Wang~\IEEEmembership{Member,~IEEE}, Yuksel Ozan Basciftci~\IEEEmembership{Member,~IEEE}, and Prakash Ishwar~\IEEEmembership{Senior Member,~IEEE}
\thanks{Y.~Wang is with Mitsubishi Electric Research Laboratories (MERL), Cambridge, MA 02139, email:~{\tt yewang@merl.com}. Y.~O.~Basciftci is with Qualcomm, Boxborough, MA 01719, email:~{\tt yukselb@qti.qualcomm.com}. P.~Ishwar is with Boston University, Boston, MA 02215, email:~{\tt pi@bu.edu}.}
\thanks{Y.~O.~Basciftci performed this work during an internship at MERL.}
\thanks{An earlier version of part of this work appeared in~\cite{ITA2016}.}}

\maketitle

\begin{abstract}
Privacy-preserving data release mechanisms aim to simultaneously minimize information-leakage with respect to sensitive data and distortion with respect to useful data. Dependencies between sensitive and useful data results in a privacy-utility tradeoff that has strong connections to generalized rate-distortion problems. In this work, we study how the optimal privacy-utility tradeoff region is affected by constraints on the data that is directly available as input to the release mechanism. In particular, we consider the availability of only sensitive data, only useful data, and both (full data). We show that a general hierarchy holds: the tradeoff region given only the sensitive data is no larger than the region given only the useful data, which in turn is clearly no larger than the region given both sensitive and useful data. In addition, we determine conditions under which the tradeoff region given only the useful data coincides with that given full data. These are based on the common information between the sensitive and useful data. We establish these results for general families of privacy and utility measures that satisfy certain natural properties required of any reasonable measure of privacy or utility. We also uncover a new, subtler aspect of the data processing inequality for general non-symmetric privacy measures and discuss its operational relevance and implications.  Finally, we derive exact closed-analytic-form expressions for the privacy-utility tradeoffs for symmetrically dependent sensitive and useful data under mutual information and Hamming distortion as the respective privacy and utility measures.
\end{abstract} 

\begin{IEEEkeywords}
data privacy, privacy-utility tradeoff, privacy measures, data processing inequality, common information
\end{IEEEkeywords}

\section{Introduction}

The objective of privacy-preserving data release is to provide useful
data with minimal distortion while simultaneously minimizing the
sensitive data revealed.  Dependencies between the sensitive and
useful data results in a privacy-utility tradeoff that has strong
connections to generalized rate-distortion
problems~\cite{RebolloFD-TKDE10-tCloseToPRAMviaIT}.  In this work, we
study how the optimal privacy-utility tradeoff region,
for general privacy and distortion measures,
is affected by constraints on the data that is directly available as
input to the release mechanism.  Such constraints are potentially
motivated by applications where either the sensitive or useful data is
not directly observable.  For example, the useful data may be a
latent property that must be inferred from only the sensitive data.
Alternatively, the constraints may be used to capture the limitations
of a particular approach, such as {\em output-perturbation} data
release mechanisms that take only the useful data as input, while
ignoring the remaining sensitive data.

The general challenge of privacy-preserving data release has been the
aim of a broad and varied field of study.  Basic attempts to anonymize
data have led to widely publicized leaks of sensitive information,
such as~\cite{Sweeney-CMU2000-SimplyIdentify,
  NarayananS-SP08-DeAnonNetlix}.  These have subsequently motivated a
wide variety of statistical formulations and techniques for preserving
privacy, such as $k$-anonymity~\cite{Sweeney02-kAnon},
$L$-diversity~\cite{MachanavajjhalaKGV-TKDD07-Ldiversity},
$t$-closeness~\cite{LiLV-07-tCloseness}, and differential
privacy~\cite{DworkMNS-TOC06-DiffPriv}.  
Our work concerns a non-asymptotic, information-theoretic treatment of
this problem, such as in~\cite{RebolloFD-TKDE10-tCloseToPRAMviaIT,
CalmonF-Allerton12}, where the sensitive data and useful data are
modeled as random variables $X$ and $Y$, respectively, and mechanism
design is the problem of constructing channels that obtain the optimal
privacy-utility tradeoffs.
While we consider a non-asymptotic, single-letter problem formulation,
there are also related asymptotic coding problems that additionally
consider communication efficiency in a rate-distortion-privacy
tradeoff, as studied in~\cite{Yamamoto-IT83-RateDistortionSecrecy,
SankarRP-TIFS13-UtilityPrivacy}.

This work makes three main contributions. First, we establish a
fundamental hierarchy of data-release mechanisms in terms of their
privacy-utility tradeoff regions. In particular, we prove that the
tradeoff region given only sensitive data is contained within the
tradeoff region given only useful data. These results are established
for general families of privacy and utility measures that satisfy
certain natural properties required of any reasonable measure of
privacy or utility. Second, we uncover a new, subtler aspect of the
data processing inequality for general non-symmetric privacy measures,
which we term as the linkage inequality, and discuss its operational
relevance and implications. In particular, we show that certain
well-known privacy measures such as maximal information and
differential privacy are not guaranteed to satisfy the linkage
inequality. Third, we derive exact closed-analytic-form expressions
for the privacy-utility tradeoffs for symmetrically dependent
sensitive and useful data under mutual information and Hamming
distortion as the respective privacy and utility measures, for all
three data-release mechanisms that we analyze in this work.

The rest of this paper is organized as follows.
In Sec.~\ref{sec:formulation}, we generalize the framework
of~\cite{RebolloFD-TKDE10-tCloseToPRAMviaIT, CalmonF-Allerton12} to
address arbitrary data observation constraints and general measures for
privacy and utility.  These generalizations allow us to consider
scenarios where the sensitive and useful data are partially
unavailable and/or observed through a noisy channel.  The connections
of this framework to other privacy-utility and generalized
rate-distortion problems encountered in the literature, when
specialized to specific data observation constraints and privacy and
utility measures, are discussed in Sec.~\ref{sec:convexity}.  We also
note that the tradeoff optimization problem with arbitrary observation
constraints is convex if the particular privacy and utility measures
have convexity properties.

In Sec.~\ref{sec:privacy}, we discuss several privacy measures,
including maximal leakage~\cite{Issa16-MaxLeakage} and differential
privacy~\cite{DworkMNS-TOC06-DiffPriv}.  We also examine several basic
properties of these privacy measures and their operational relevance.
A general privacy leakage measure, denoted by $J(X;Z)$, is a functional
of the joint distribution of the sensitive data $X$ and data release
$Z$.  For non-symmetric privacy measures (where $J(X;Z)$ does not
necessarily equal $J(Z;X)$), and given $A \to B \to C$ that form a
Markov chain, the inequality $J(A;C) \leq J(A;B)$ is distinct from
$J(A;C) \leq J(B;C)$.  The first inequality
is equivalent to the well-known post-processing inequality that is
considered an axiomatic requirement of any reasonable privacy
measure~\cite{KiferLin-JPC12-Axioms}.  The second inequality could be
interpreted as bounding privacy leakage for some secondary sensitive
data $A$ when a release mechanism that produces $C$ offers a privacy
leakage guarantee for the primary sensitive data $B$.  Interestingly,
this second inequality does not hold for some privacy measures, such as
differential privacy, and is necessary to show some of our tradeoff
results in Sec.~\ref{sec:hierarchy}.

In Sec.~\ref{sec:hierarchy}, we compare the optimal privacy-utility
tradeoffs under three scenarios, where only the
sensitive data, only the useful data, or both (full data) are
available.  We show that a general hierarchy holds, that is, the
tradeoff region given only the sensitive data is no larger than the
region given only the useful data, which in turn is clearly no larger
than the region given both sensitive and useful data.  We also show
that if the common information and mutual information between the
sensitive and useful data are equal\footnote{This statement applies
for both the Wyner~\cite{Wyner-75-CommonInfo} and
G\'acs-K\"orner~\cite{GacsKorn-73-CommonInfo} notions of common
information.}, then the tradeoff region given only the useful data
coincides with that given full data, indicating when output
perturbation is optimal despite unavailability of the sensitive data.
Conversely, when the common information and mutual information are not
equal, there exist distortion measures where the tradeoff regions are
not the same, indicating that output perturbation can be strictly
suboptimal compared to the full data scenario.
In Sec.~\ref{sec:example}, we present an example with analytically derived
optimal privacy-utility tradeoffs illustrating the hierarchy established
by the results in Sec.~\ref{sec:hierarchy}.

\section{Privacy-Utility Tradeoff Problem}
\label{sec:formulation}

\begin{figure}[t]
   \centering
   \includegraphics[width=0.48\textwidth]{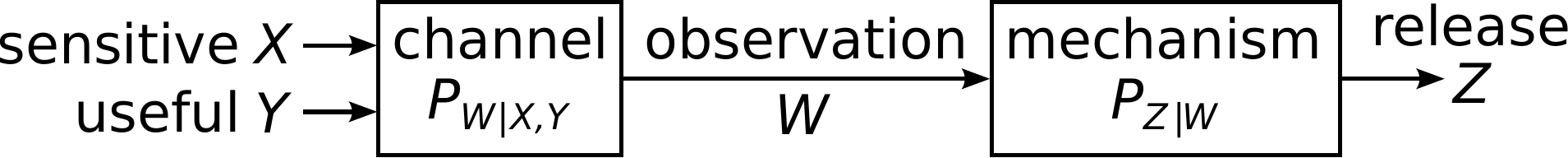}
   \caption{The observation $W$ of the sensitive data $X$ and useful data $Y$ is input to the data release mechanism which produces the released data $Z$.}
   \label{fig:problemDiagram}
 \end{figure}

Let $X$, $Y$, and $W$ be discrete random variables (RVs) distributed
on finite alphabets $\mathcal{X}$, $\mathcal{Y}$ and $\mathcal{W}$,
respectively. Let $X$ denote the sensitive information that the user
wishes to conceal, $Y$ the useful information that the user is willing
to reveal, and $W$ the directly observable data, which may represent a
noisy observation of $X$ and/or $Y$.  The target application
imposes the specific {\em data model} $P_{X,Y}$
and {\em observation constraints} $P_{W|X,Y}$ so that $(X, Y, W) \sim
P_{X,Y} P_{W|X,Y}$.  The {\em data release mechanism} takes $W$ as input
and (randomly) generates output $Z$ in a given finite alphabet
$\mathcal{Z}$ dictated by the target application (perhaps implicitly
via the distortion measure).  Note that $(X,Y) \to W \to Z$ form a 
Markov chain and the mechanism can be specified by the
conditional distribution $P_{Z|W}$. A diagram of the overall system is shown in
Figure~\ref{fig:problemDiagram}.

The mechanism should be designed such that $Z$ provides
application-specific utility through the information it reveals about
$Y$ while protecting privacy by limiting the information it reveals
about $X$.

{\bf Privacy:} The privacy of the mechanism-output $Z$ is
inversely quantified by a general privacy-leakage measure $J(X;Z)$,
which is a functional\footnote{Formally, the privacy measure
notation should be $J(P_{X,Z})$, but for convenience we adopt
$J(X;Z)$, an abuse of notation similar to the use of $I(X;Y)$ for
mutual information.} that assigns values in $[0, \infty)$ to
joint distributions of $X$ and $Z$.  Thus, the aim of privacy is to
minimize $J(X;Z)$, which ideally becomes perfect when $J(X;Z) = 0$.
The privacy-leakage measure need not be symmetric, i.e.,
$J(X;Z)$ need not equal $J(Z;X)$.
Examples of privacy measures include symmetric ones like {\em mutual
information}, where $J(X;Z) = I(X;Z)$, which captures an
average-case information leakage, and asymmetric ones like {\em maximal
information leakage}, where $J(X;Z) = \max_{z \in \mathcal{Z}} H(X)
- H(X | Z = z)$ \cite{CalmonF-Allerton12}. In Sec.~\ref{sec:privacy}
we will discuss three other privacy measures: information privacy,
differential privacy, and Sibson mutual information. The first of
these is symmetric, while the other two are not.

{\bf Utility:} The amount of utility that the
mechanism-output $Z$ provides about the useful information represented
by $Y$ is inversely quantified by a general distortion measure
$D(P_{Y,Z})$, which is a functional that assigns values in $[0, \infty)$
to joint distributions of $Y$ and $Z$.
Thus, the aim is to minimize $D(P_{Y,Z})$.
As in the case of privacy, distortion measures need not be symmetric.
The specific distortion measure is dictated by the target
application.  Example distortion measures include: 1) {\em expected
distortion}, where $D(P_{Y,Z}) = E[d(Y,Z)]$ for some distortion
function $d: \mathcal{Y} \times \mathcal{Z} \to [0, \infty)$, and 2) {\em
conditional entropy}, where $D(P_{Y,Z}) = H(Y|Z)$ which corresponds
to the goal of maximizing the mutual information between $Y$ and
$Z$.  Note that probability of error $\Pr(Y \neq Z)$ is an example
within the class of expected distortion measures where $d(y,z)$ is
equal to zero when $y = z$ and equal to one otherwise.

{\bf Privacy-utility tradeoff:} Given a target application
that specifies the data model $P_{X,Y}$, observation model
$P_{W|X,Y}$, and distortion measure $D(P_{Y,Z})$, the goal of the
system designer is to construct mechanisms $P_{Z|W}$ that provide the
desired levels of privacy and utility while achieving the optimal
tradeoff.  We say that a particular privacy-utility pair $(\epsilon,
\delta) \in [0, \infty)^2$ is {\em achievable} if there exists a
mechanism $P_{Z|W}$ with privacy leakage $J(X;Z) \leq \epsilon$ and
distortion $D(P_{Y,Z}) \leq \delta$. The set of all achievable
privacy-utility pairs forms the {\em achievable region} of
privacy-utility tradeoffs.  We are particularly interested the {\em
optimal boundary} of this region, which can be expressed by the
optimization problem
\begin{equation} \label{eqn:general_tradeoff_opt}
\begin{aligned}
\pi(\delta) := & \inf_{P_{Z|W}} J(X;Z) \\
& \ \ \text{s.t.} \ D(P_{Y,Z}) \leq \delta,
\end{aligned}
\end{equation}
which determines the optimal privacy leakage as a function of the
allowable distortion $\delta$.

The distortion constraint, $D(P_{Y,Z}) \leq \delta$, can be
equivalently expressed as a constraint on the conditional distribution
$P_{Z|Y}$ since $P_Y$ is fixed by the data model.  Note that a
mechanism specified by $P_{Z|W}$ determines the corresponding
$P_{Z|Y}$ through the linear relationship\footnote{This and all other statements involving conditional distributions are defined only for symbols in the support of the conditioned random variables.}
\begin{equation} \label{eqn:ZgivenYlinear}
\! P_{Z|Y}(z|y) = \hspace{-12pt} \sum_{w \in \mathcal{W}, x \in \mathcal{X}} \hspace{-12pt}
P_{Z|W}(z|w) P_{W|X,Y}(w|x,y) P_{X|Y}(x|y).
\end{equation}
Similarly, $P_{Z|X}$ is determined by $P_{Z|W}$ through the linear
relationship
\begin{equation} \label{eqn:ZgivenXlinear}
\! P_{Z|X}(z|x) = \hspace{-12pt} \sum_{w \in \mathcal{W}, y \in \mathcal{Y}} \hspace{-12pt}
P_{Z|W}(z|w) P_{W|X,Y}(w|x,y)P_{Y|X}(y|x).
\end{equation}

While general observation models $P_{W|X,Y}$ can be considered within this
framework, particular structures may be of interest for certain
applications.  We highlight and explore the relationship between three
specific cases for $W$, while allowing a general distribution $P_{X,Y}$
between the sensitive and useful data.

1) {\it Full Data}: In this case, $P_{X,Y}$ is general but
$W = (X,Y)$, capturing the situation when the mechanism has direct
access to both the sensitive and useful information.  For this case,
the privacy-utility optimization problem
of~\eqref{eqn:general_tradeoff_opt} reduces to
\begin{equation}\label{eqn:FD_opt_problem}
\begin{aligned}
\pi_\text{FD}(\delta) := & \inf_{P_{Z|X,Y}} J(X;Z) \\
& \ \ \text{s.t.} \ D(P_{Y,Z}) \leq \delta.
\end{aligned}
\end{equation}

2) {\it Output Perturbation}: In this case, $P_{X,Y}$ is
general but $W = Y$, capturing the situation when the mechanism only
has direct access to the useful information.  For this case, the
privacy-utility optimization problem
of~\eqref{eqn:general_tradeoff_opt} reduces to
\begin{equation}\label{eqn:OP_opt_problem}
\begin{aligned}
\pi_\text{OP}(\delta) := & \inf_{P_{Z|Y}} J(X;Z) \\
& \ \ \text{s.t.} \ D(P_{Y,Z}) \leq \delta,
\end{aligned}
\end{equation}
where $P_{Z|X}(z|x) = \sum_{y \in \mathcal{Y}} P_{Z|Y}(z|y)
P_{Y|X}(y|x)$.
Note that this optimization is equivalent to that of~\eqref{eqn:FD_opt_problem},
with the Markov chain $X \to Y \to Z$ imposed as an additional
constraint.

3) {\it Inference}: In this case, $P_{X,Y}$ is general but
$W = X$, capturing the situation when the mechanism only has direct
access to the sensitive information, but the useful information, such
as a hidden state, is not directly observable and needs to be {\it
inferred} indirectly by processing the sensitive information.
For this case, the privacy-utility
optimization problem of~\eqref{eqn:general_tradeoff_opt} reduces to
\begin{equation}\label{eqn:INF_opt_problem}
\begin{aligned}
\pi_\text{INF}(\delta) := & \inf_{P_{Z|X}} J(X;Z) \\
& \ \ \text{s.t.} \ D(P_{Y,Z}) \leq \delta,
\end{aligned}
\end{equation}
where $P_{Z|Y}(z|y) = \sum_{x \in \mathcal{X}} P_{Z|X}(z|x)
P_{X|Y}(x|y)$.
Note that this optimization is equivalent to that of~\eqref{eqn:FD_opt_problem}, with
the Markov chain $Y \to X \to Z$ imposed as an additional constraint.

\section{Convexity and Rate-Distortion Connections}
\label{sec:convexity}

Here we discuss how under certain combinations of data constraints and privacy and 
utility measures, the tradeoff optimization of~\eqref{eqn:general_tradeoff_opt}
specializes to various rate-distortion and privacy-utility
problems encountered in the literature.  We also indicate when the
general tradeoff optimization of~\eqref{eqn:general_tradeoff_opt}
becomes convex for particular privacy and utility measures.

Recall that the distributions $P_{Z|X}$ and $P_{Z|Y}$ are {\it linear} functions of the optimization variable
$P_{Z|W}$ as shown by~\eqref{eqn:ZgivenYlinear}
and~\eqref{eqn:ZgivenXlinear}, while $P_{X,Y,W}$ and its marginals are fixed.
Thus, the convexity properties of the general problem (and in the three scenarios given
by~\eqref{eqn:FD_opt_problem},~\eqref{eqn:OP_opt_problem},
and~\eqref{eqn:INF_opt_problem}) will follow from
the convexity properties of the privacy and distortion measures as
functions of $P_{Z|X}$ and $P_{Z|Y}$, respectively.
For example, with mutual information as the privacy measure $I(X;Z)$, the objective of the tradeoff optimization problem is a convex functional of $P_{Z|X}$.
Any distortion measure that is a convex functional of $P_{Z|Y}$ results in a convex constraint.
For example, any expected distortion utility measure
$D(P_{Y,Z}) = E[d(Y,Z)]$ is a linear (and hence convex)
functional of $P_{Z|Y}$.

The privacy-utility tradeoff problem as considered
by~\cite{RebolloFD-TKDE10-tCloseToPRAMviaIT, CalmonF-Allerton12}
assumes the output perturbation constraint
(see~\eqref{eqn:OP_opt_problem}), while using expected distortion
$D(P_{Y,Z}) = E[d(Y,Z)]$ as the utility measure, and mutual information
$I(X;Z)$ as the privacy measure.
Additionally,~\cite{CalmonF-Allerton12} also considers maximum
information leakage, $\max_{z \in \mathcal{Z}} \left[ H(X) - H(X | Z =
z) \right]$, as an alternative privacy measure.  As noted
by~\cite{CalmonF-Allerton12}, the optimization problem for the full
data scenario (see~\eqref{eqn:FD_opt_problem}) can be recast as an
optimization with the output perturbation constraint, by redefining
the useful data as $Y' := (X,Y)$ and the distortion function as
$d'(Y',Z) := d(Y,Z)$.  This approach allows one to solve the
optimization problem for the full data scenario using an equivalent
optimization problem appearing in the output perturbation scenario.
However, the distinction between these two scenarios should not be
overlooked, as the output perturbation scenario represents a
fundamentally different problem where the sensitive data is not
available, which in general results in a strictly smaller
privacy-utility tradeoff region (see Theorem~\ref{theorem:OPConverse}).

The inference scenario given by~\eqref{eqn:INF_opt_problem} with mutual information
as the privacy measure and expected distortion $D(P_{Y,Z}) = E[d(Y,Z)]$ as the utility measure is
equivalent to an indirect rate-distortion problem~\cite{witsenhausen1980indirect}.
As shown by Witsenhausen in~\cite{witsenhausen1980indirect}, indirect
rate-distortion problems can be converted to direct ones with the
modified distortion measure $d'(x,z) := E[d(Y,Z)|X = x, Z = z] =
\sum_{y\in \mathcal{Y}} d(y,z) P_{Y|X}(y|x)$ since $Y \to X \to Z$
forms a Markov chain.

When the utility measure is conditional entropy, i.e., $D(P_{Y,Z}) =
H(Y|Z)$, the distortion constraint can be equivalently
written as $I(Y;Z) \geq \delta'$, where $\delta' := H(Y) - \delta$,
thus the utility objective is to maximize the mutual information $I(Y;Z)$.
Combining this with mutual information as the privacy measure
results in the optimization problem of choosing
$Z$ to minimize $I(X;Z)$ subject to a lower bound on $I(Y;Z)$.  This
problem in the inference scenario, where the additional Markov chain
constraint $Y \to X \to Z$ is imposed, is equivalent to the
Information Bottleneck problem considered
in~\cite{TishbyPB-Allerton00-InfoBottleneck}, which also provides a
generalization of the Blahut-Arimoto algorithm~\cite{cover2012elements}
to perform this optimization.  For the output
perturbation scenario, where instead the Markov chain constraint $X
\to Y \to Z$ is imposed, this problem is called the Privacy Funnel and
was proposed by~\cite{MakhdoumiSFW-ITW14-PrivacyFunnel}.  In all three
scenarios, the optimization problems are non-convex as the feasible
regions are non-convex, and specifically are complements of convex
regions.

\section{Privacy Measures and Properties}
\label{sec:privacy}

We allow general statistical measures of privacy-leakage
that can be arbitrary functionals of the joint
distribution between the sensitive data $X$ and the
release $Z$. However, in order for some of our later
results in Section~\ref{sec:hierarchy} to hold, the privacy
measure must posses certain natural, desirable properties
described in this section. In particular, generalized
analogies of the data processing inequality are
important. We will also discuss several privacy measures
encountered in the literature and whether they satisfy these
properties.

We will generally assume the following two properties,
which hold for all of the specific privacy measures
discussed in this paper.

In this section, we focus on privacy measures in more
detail and generality. We discuss certain key desirable properties
that any measure of privacy should satisfy within the context of
privacy-preserving data release. In particular, generalized
analogies of the data processing inequality are
important. Specifically, we uncover and highlight a new, subtler
aspect of the data processing inequality for general non-symmetric
privacy measures, which we term as the linkage inequality, and
discuss its operational relevance and implications. We show that
certain well-known privacy measures such as maximal information and
differential privacy are not guaranteed to satisfy the linkage
inequality. Our results pertaining to the fundamental hierarchy of
privacy-utility tradeoffs in Sec.~\ref{sec:hierarchy} hold for
general privacy measures that satisfy the properties described in
this section.

We allow general statistical measures of privacy-leakage
that can be arbitrary functionals of the joint distribution between
the sensitive data $X$ and the release $Z$. However, we require that
the privacy measure satisfy the following two basic properties which
hold for all of the specific privacy measures discussed in this
paper.
\begin{itemize}
\item \textbf{Perfect privacy is independence:} $J(X;Z) \geq 0$ with
  equality if and only if $X$ and $Z$ are independent.
\item \textbf{Privacy invariance:} $J(X_1;Z_1) = J(X_2;Z_2)$ if
  $P_{X_1, Z_1}$ and $P_{X_2, Z_2}$ are isomorphically equivalent
  distributions.
\end{itemize}

The following property establishes that a privacy measure captures the notion that
privacy cannot be worsened, i.e., privacy-leakage
cannot be increased, by independent post-processing of the released
data.  This well-known concept is considered a fundamental, axiomatic
requirement for any reasonable privacy measure~\cite{KiferLin-JPC12-Axioms}.

\begin{definition} \label{def:PostProcIneq}
\textbf{(Post-processing inequality)} A privacy measure $J$ satisfies
the post-processing inequality if and only if for any $A \to B \to C$
that form a Markov chain, we have that $J(A;B) \geq J(A;C)$.
\end{definition}

For symmetric privacy measures where $J(X;Z) = J(Z;X)$ (i.e.,
privacy-leakage remains unchanged when swapping the roles of the
release and sensitive data), the next property is equivalent to the
post-processing inequality.  However, for asymmetric privacy measures,
this property is a distinct concept.

\begin{definition} \label{def:LinkageIneq}
\textbf{(Linkage inequality)} A privacy measure $J$ satisfies the
linkage inequality if and only if for any $A \to B \to C$ that form a
Markov chain, we have that $J(B;C) \geq J(A;C)$.
\end{definition}

The linkage inequality captures the notion that if there were primary
and secondary sensitive data and the release was independently
generated from only the primary sensitive data, then the
privacy-leakage for the secondary sensitive data is bounded by the
privacy-leakage for the primary sensitive data.  Intuitively, this
concept corresponds to the privacy-leakage of the secondary sensitive
data occurring via and being limited by the privacy-leakage of the
primary sensitive data.  Pragmatically, this property allows for
convenient bounds when making privacy guarantees, especially when
there may be unforeseen secondary sensitive data correlated to the
primary sensitive data considered.

Note that satisfying both inequalities of
Definitions~\ref{def:PostProcIneq} and~\ref{def:LinkageIneq} would
imply the property of privacy invariance assumed earlier, but the
reverse is not necessarily true.  Of course, when mutual information
is the privacy measure, both of these inequalities are immediate as
they are equivalent to the data processing inequality.

In the rest of this section, we discuss the post-processing
and linkage inequalities in the context of a number of commonly
encountered privacy measures.

\subsection{Maximal Information Leakage}

The {\it maximal information leakage} measure, introduced
in~\cite{CalmonF-Allerton12}, is defined as follows
\begin{equation} \label{eqn:maxInfoLeakage}
I^*(X; Z) := H(X) - \min_{z \in \mathcal{Z}} H(X | Z = z),
\end{equation}
This is an example of an {\it asymmetric} privacy measure that aims to
capture the worst-case information leakage over the possible releases.
Interestingly, while the post-processing inequality holds for this
measure, the linkage inequality does not.
The proof of this proposition is given in Appendix~\ref{app:MaxIPineq}.

\begin{proposition} \label{prop:MaxIPineq}
The maximal information leakage measure $I^*(X; Z)$ satisfies the
post-processing inequality, but does not satisfy the linkage
inequality.
\end{proposition}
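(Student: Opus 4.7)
The proposition has two parts: verifying post-processing and refuting linkage. I would treat them separately, with most of the creative work in the second part.

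For the post-processing inequality, the plan is to show $\min_{c} H(A \mid C = c) \geq \min_{b} H(A \mid B = b)$, from which $I^*(A;B) \geq I^*(A;C)$ follows directly from the definition in~\eqref{eqn:maxInfoLeakage}. Using the Markov property $A \to B \to C$, for every $c$ in the support of $C$ we have the mixture representation
\begin{equation*}
P_{A \mid C = c}(a) \;=\; \sum_{b} P_{A \mid B = b}(a)\, P_{B \mid C = c}(b),
\end{equation*}
so $P_{A \mid C = c}$ is a convex combination of the conditional distributions $\{P_{A \mid B = b}\}_b$. Concavity of the Shannon entropy then gives
\begin{equation*}
H(A \mid C = c) \;\geq\; \sum_b P_{B \mid C = c}(b)\, H(A \mid B = b) \;\geq\; \min_{b} H(A \mid B = b).
\end{equation*}
Taking the minimum over $c$ on the left yields the desired inequality.

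For the failure of the linkage inequality, the plan is to exhibit a small Markov chain $A \to B \to C$ for which $I^*(A;C) > I^*(B;C)$. The cleanest construction takes $C = B$ (so the Markov chain is trivially satisfied) and chooses $B$ to be a deterministic many-to-one function of $A$ whose preimages have different conditional entropies. A concrete choice: let $A$ be uniform on $\{0,1,2\}$ and set $B = \mathbf{1}[A \neq 0]$. Then $B = 0$ reveals $A = 0$ exactly, so $H(A \mid B = 0) = 0$, while $H(A \mid B = 1) = 1$. With $C = B$, the right-hand side of linkage equals $I^*(B;C) = H(B) - 0 = h(1/3) < 1$, whereas the left-hand side is $I^*(A;C) = H(A) - \min_c H(A \mid C = c) = \log 3 - 0 = \log 3$. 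Hence $I^*(A;C) > I^*(B;C)$, contradicting linkage. I would present this counterexample, verify the Markov chain, and compute the two quantities explicitly.

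The post-processing part is routine once concavity of entropy is invoked at fixed $c$; the only subtlety is remembering to use $P_{A \mid B, C} = P_{A \mid B}$ from the Markov assumption. The main obstacle, and the interesting content of the proposition, is the linkage counterexample. The intuition to highlight is that $I^*$ is driven by the \emph{best-case} conditional entropy (the $\min_z$), so coarsening $A$ down to $B$ can dramatically reduce the entropy budget $H(\cdot)$ while leaving the minimum conditional entropy at zero in both cases; the asymmetry between how $H(A)$ and $H(B)$ shrink under the same release $C$ is precisely what breaks linkage. Any $A \to B \to C$ in which $B$ is a deterministic lossy function of $A$ and $C$ fully identifies $B$ will exhibit this gap, so the construction is robust.
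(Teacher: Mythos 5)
Your proof is correct and essentially matches the paper's: for post-processing you establish the same key bound $\min_c H(A\mid C=c)\geq\min_b H(A\mid B=b)$, just via the mixture representation $P_{A\mid C=c}=\sum_b P_{B\mid C=c}(b)P_{A\mid B=b}$ plus concavity of entropy, whereas the paper reaches it through a chain-rule and conditioning-reduces-entropy argument (the same underlying inequalities repackaged). Your linkage counterexample (uniform ternary $A$, $B=\mathbf{1}[A\neq 0]$, $C=B$) is structurally identical to the paper's (which takes $P_{X_2}=(1/2,1/4,1/4)$, $X_1=\mathbf{1}[X_2\neq 0]$, $Z=X_1$), both exploiting that a lossy deterministic coarsening followed by an identity release keeps $\min_c H(\cdot\mid C=c)=0$ for both variables while $H(A)>H(B)$.
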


Note that swapping the roles of $X$ and $Z$ to define $J(X;Z) = I^*(Z;X)$
would yield a measure that satisfies the linkage inequality, but not the
post-processing inequality.

\subsection{Maximal Leakage via Sibson Mutual Information}

Another privacy measure similarly called {\em maximal leakage} is equivalent to Sibson mutual information of order infinity~\cite{sibson1969information}, which is given by
\[
I_\infty(X;Z) := \log \sum_{z \in \mathcal{Z}} \max_{x: P_X(x) > 0} P_{Z|X}(z|x).
\]
Demonstrating its operational significance as a privacy measure,~\cite{Issa16-MaxLeakage} showed that
\[
I_\infty(X;Z) = \sup_{U \to X \to Z \to \hat{U}} \log \frac{\Pr(\hat{U} = U)}{\max_u P_U(u)},
\]
which implies that $\exp(I_\infty(X;Z))$ bounds the multiplicative advantage gained from observing $Z$ for guessing any (potentially random) function of $X$.
This operational bound holds even for generalizations allowing multiple or approximate guesses (see details in~\cite{Issa16-MaxLeakage}).
Maximal leakage is asymmetric and satisfies the post-processing and linkage inequalities~\cite{Issa16-MaxLeakage}.

\subsection{Information Privacy}

The {\it information privacy} (IP) measure was introduced
in~\cite{CalmonF-Allerton12}. The following definition differs from
the one given in \cite{CalmonF-Allerton12}, but is equivalent to
it (see Corrolary~\ref{cor:InfoPrivEquiv}),
\begin{equation} \label{eqn:epsInfoPriv}
IP(X;Z) := \max_{x,z : P_X(x), P_Z(z) > 0} \left | \ln
\frac{P_{X,Z}(x,z)}{P_X(x) P_Z(z)} \right |,
\end{equation}
where we adopt the convention that $|\ln 0| = \infty$, denoting that
IP leakage is unbounded when there exist $x$ and $z$ such that
$P_X(x), P_Z(z) > 0$ and $P_{X,Z}(x,z) = 0$.  This quantity can be
equivalently viewed as a bound on the absolute log-ratio of the
sensitive data prior distribution and the posterior
distribution given the release, since
\[
\frac{P_{X,Z}(x,z)}{P_X(x) P_Z(z)} = \frac{P_{X|Z}(x|z)}{P_X(x)}.
\]
With respect to the definition of information privacy in
\cite{CalmonF-Allerton12}, a data release mechanism $P_{Z|X}$
provides $\epsilon$-information privacy if $IP(X;Z) = \epsilon$.

\begin{lemma} \label{lem:EpsInfoPrivIneq}
The information privacy measure $IP(X;Z)$ satisfies both the post-processing and
linkage inequalities.
\end{lemma}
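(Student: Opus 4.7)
The plan is to first observe that $IP$ is a \emph{symmetric} privacy measure: the ratio $P_{X,Z}(x,z) / (P_X(x) P_Z(z))$ is unchanged by swapping the roles of $X$ and $Z$, so $IP(X;Z) = IP(Z;X)$. Given this symmetry, the linkage inequality for a Markov chain $A \to B \to C$ becomes $IP(C;B) \geq IP(C;A)$, which is exactly the post-processing inequality applied to the reversed Markov chain $C \to B \to A$. Hence it suffices to prove only the post-processing inequality.

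To prove post-processing, I would set $\epsilon := IP(A;B)$ and handle the trivial case $\epsilon = \infty$ separately. When $\epsilon < \infty$, the definition forces $P_{A,B}(a,b) > 0$ whenever $P_A(a), P_B(b) > 0$, and the ratio $r(a,b) := P_{A,B}(a,b)/(P_A(a)P_B(b))$ lies in $[e^{-\epsilon}, e^{\epsilon}]$ for all such $(a,b)$. The key algebraic step is a decomposition of the ratio for $(A,C)$: using the Markov property $P_{C|A,B} = P_{C|B}$, I would write
\begin{equation*}
\frac{P_{A,C}(a,c)}{P_A(a) P_C(c)}
= \sum_{b} \frac{P_{B|A}(b|a) P_{C|B}(c|b)}{P_C(c)}
= \sum_{b} P_{B|C}(b|c)\, r(a,b),
\end{equation*}
exhibiting the $(A,C)$ ratio as a convex combination (with weights $P_{B|C}(\cdot|c)$) of the $(A,B)$ ratios $r(a,b)$.

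Because each $r(a,b)$ appearing in the sum (i.e., with $P_{B|C}(b|c) > 0$, which forces $P_B(b) > 0$) lies in $[e^{-\epsilon}, e^{\epsilon}]$, the convex combination lies in the same interval, and therefore $|\ln (P_{A,C}(a,c)/(P_A(a)P_C(c)))| \leq \epsilon$ for every $(a,c)$ with $P_A(a), P_C(c) > 0$. Taking the maximum yields $IP(A;C) \leq \epsilon = IP(A;B)$, which is post-processing; the linkage inequality then follows from the symmetry observation above.

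The only delicate point is bookkeeping of the support and the $\infty$ convention: I need to verify that whenever a term $P_{B|C}(b|c)$ contributes nonzero weight, the corresponding ratio $r(a,b)$ is indeed in the range where the $IP(A;B)$ bound applies. This is immediate since $P_{B|C}(b|c) > 0$ implies $P_B(b) > 0$, and we are fixing $P_A(a) > 0$; but it is worth spelling out to avoid any ambiguity with the $|\ln 0| = \infty$ convention. Beyond this, the proof is essentially a one-line convex-combination argument, so no further obstacle is anticipated.
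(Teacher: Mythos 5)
Your proof is correct and is essentially the paper's own argument: both first invoke the symmetry of $IP$ to reduce linkage to post-processing, and then establish post-processing by expressing the $(A,C)$ density ratio as a convex combination (an expectation over the intermediate variable) of the $(A,B)$ density ratios, whose range is preserved under averaging. The paper phrases the decomposition in terms of posterior-to-prior ratios $P_{X|Z}(x|z)/P_X(x)$ while you use the equivalent joint ratio $P_{A,C}(a,c)/(P_A(a)P_C(c))$, and you are somewhat more explicit about the $\infty$ case and support bookkeeping, but the idea and structure are the same.
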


Lemma~\ref{lem:EpsInfoPrivIneq} leads to the following
corollary which implies that expanding the domain of maximization
in~\eqref{eqn:epsInfoPriv}, from singleton events $\{x\}$ and
$\{z\}$ to events $A \subset \mathcal{X}$ and $B \subset
\mathcal{Z}$, does not increase the maximum value.

\begin{corollary} \label{cor:InfoPrivEquiv}
The information privacy measure is equivalently given by
\begin{equation*}
IP(X;Z) = \max_{\substack{\mathcal{A}\subseteq \mathcal{X},
    \mathcal{B} \subseteq \mathcal{Z}:\\ \Pr(X \in \mathcal{A}), \Pr(Z \in
    \mathcal{B}) > 0}} \left | \ln \frac{\Pr(X\in \mathcal{A}, Z\in
  \mathcal{B})}{\Pr(X\in \mathcal{A}) \Pr(Z \in \mathcal{B})} \right |.
\end{equation*}
\end{corollary}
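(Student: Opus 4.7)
The plan is to prove equality by showing each direction separately. One direction is immediate: singleton events $\{x\}$ and $\{z\}$ (with $P_X(x), P_Z(z) > 0$) are a special case of general subsets $\mathcal{A} \subseteq \mathcal{X}$, $\mathcal{B} \subseteq \mathcal{Z}$ (with $\Pr(X \in \mathcal{A}), \Pr(Z \in \mathcal{B}) > 0$), so the set-based maximum is at least $IP(X;Z)$ as originally defined in~\eqref{eqn:epsInfoPriv}.

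The nontrivial direction is to show that the set-based maximum is no larger than $IP(X;Z)$. For this I would invoke Lemma~\ref{lem:EpsInfoPrivIneq} via a quantization argument. Fix any pair of sets $\mathcal{A}, \mathcal{B}$ with positive marginal probability, and define the indicator random variables $X' := \mathbf{1}[X \in \mathcal{A}]$ and $Z' := \mathbf{1}[Z \in \mathcal{B}]$. Then $X' \to X \to Z \to Z'$ forms a Markov chain (since $X'$ is a function of $X$ and $Z'$ is a function of $Z$, and $X \to Z$ is trivially Markov). Applying the post-processing inequality to $X \to Z \to Z'$ yields $IP(X;Z) \geq IP(X;Z')$, and then applying the linkage inequality to $X' \to X \to Z'$ yields $IP(X;Z') \geq IP(X';Z')$, so in combination $IP(X;Z) \geq IP(X';Z')$.

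Next I would evaluate $IP(X';Z')$ from its definition at the singleton event $(x',z') = (1,1)$, which gives
\[
IP(X';Z') \geq \left| \ln \frac{P_{X',Z'}(1,1)}{P_{X'}(1)\,P_{Z'}(1)} \right| = \left| \ln \frac{\Pr(X \in \mathcal{A},\, Z \in \mathcal{B})}{\Pr(X \in \mathcal{A})\,\Pr(Z \in \mathcal{B})} \right|,
\]
valid because both marginal probabilities are strictly positive by assumption. Chaining with the previous inequality and taking the maximum over all admissible $(\mathcal{A}, \mathcal{B})$ gives the desired bound. The degenerate cases where $X'$ or $Z'$ is a.s. constant (i.e., $\Pr(X \in \mathcal{A}) = 1$ or $\Pr(Z \in \mathcal{B}) = 1$) are handled automatically, since then the log-ratio at $(1,1)$ collapses to $0$, consistent with the general bound.

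The only subtlety I anticipate is the case $\Pr(X \in \mathcal{A},\, Z \in \mathcal{B}) = 0$, where both sides involve the convention $|\ln 0| = \infty$. This should be resolved directly from the original definition: positive marginals on $\mathcal{A}$ and $\mathcal{B}$ guarantee the existence of $x \in \mathcal{A}$ and $z \in \mathcal{B}$ with $P_X(x), P_Z(z) > 0$, and for any such pair $P_{X,Z}(x,z) \leq \Pr(X \in \mathcal{A},\, Z \in \mathcal{B}) = 0$, so $IP(X;Z) = \infty$ as well and the inequality holds trivially. Thus the main obstacle — constructing a single transformation that simultaneously coarsens $X$ to an event in $\mathcal{X}$ and $Z$ to an event in $\mathcal{Z}$ while preserving the information-privacy bound — is cleanly dispatched by the post-processing and linkage inequalities acting on opposite ends of the Markov chain.
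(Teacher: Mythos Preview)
Your proposal is correct and follows essentially the same approach as the paper: both reduce the set-based quantity to $IP(\mathbf{1}[X\in\mathcal{A}];\mathbf{1}[Z\in\mathcal{B}])$ via the Markov chain $\mathbf{1}[X\in\mathcal{A}]\to X\to Z\to \mathbf{1}[Z\in\mathcal{B}]$ and then invoke Lemma~\ref{lem:EpsInfoPrivIneq}. Your version is slightly more explicit in separating the post-processing and linkage steps and in handling the degenerate and zero-probability cases, but the substance is the same.
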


\noindent The proofs of Lemma~\ref{lem:EpsInfoPrivIneq} and
Corollary~\ref{cor:InfoPrivEquiv} are presented in
Appendices~\ref{app:EpsInfoPrivIneq} and \ref{app:InfoPrivEquiv}
respectively.

\subsection{Differential Privacy}

The {\em differential privacy} (DP) measure was introduced
by~\cite{DworkMNS-TOC06-DiffPriv} and has been extensively studied
in the context of privacy-preserving querying of databases.
For ease of exposition, within this subsection we will model a database as
a length-$n$ binary sequence, i.e., in the domain $\mathcal{X} =
\{0,1\}^n$, and assume a discrete release alphabet $\mathcal{Z}$.
However, the concepts and discussion readily generalize.

\begin{definition}
A release mechanism $P_{Z|X}$ with domain $\mathcal{X} = \{0,1\}^n$
and range $\mathcal{Z}$ is $\epsilon$-differentially
private if for all $\mathcal{B} \subseteq \mathcal{Z}$ and $x_1,
x_2 \in \mathcal{X}$ such that $d_H(x_1,x_2) \leq 1$, where $d_H$
denotes Hamming distance, we have
\[
\Pr(Z \in \mathcal{B} | X = x_1) \leq e^{\epsilon} \cdot \Pr(Z\in \mathcal{B} |
X = x_2).
\]
\end{definition}

Implicitly, if there exist $x_1,x_2 \in \mathcal{X}$ with
$d_H(x_1,x_2) = 1$ and $z \in \mathcal{Z}$ such that $P_{Z|X}(z|x_1) >
0$, but $P_{Z|X}(z|x_2) = 0$, then the release mechanism $P_{Z|X}$ is
not differentially private for any $\epsilon$.  The differential
privacy measure $DP(X;Z)$ is defined as the smallest value of
$\epsilon$ for which $P_{Z|X}$ is $\epsilon$-differentially private,
which is expressed in the following lemma whose proof
is presented in Appendix~\ref{app:DPmetric}.

\begin{lemma} \label{lem:DPmetric}
The differential privacy measure is given by
\[
DP(X;Z) = \max_{\substack{x_1,x_2 \in \mathcal{X}, z \in \mathcal{Z}:\\
  d_H(x_1,x_2) = 1}}
\left | \ln \frac{P_{Z|X}(z|x_1)}{P_{Z|X}(z|x_2)} \right |,
\]
where we adopt the conventions that $|\ln (c/0)| = |\ln 0|= \infty$ and $|\ln (0/0)| = 0$.
\end{lemma}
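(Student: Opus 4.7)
The plan is to establish Lemma~\ref{lem:DPmetric} by showing the right-hand side is both a lower and upper bound on the smallest $\epsilon$ for which $P_{Z|X}$ is $\epsilon$-differentially private. Let $\epsilon^\star$ denote the maximum in the statement.

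For the lower bound $DP(X;Z) \ge \epsilon^\star$, I would take any valid $\epsilon$ (so that the DP inequality holds for all subsets $\mathcal{B}$ and all $x_1,x_2$ with $d_H(x_1,x_2)\le 1$) and specialize to singleton sets $\mathcal{B} = \{z\}$. Applying the inequality to both ordered pairs $(x_1,x_2)$ and $(x_2,x_1)$ yields
\[
e^{-\epsilon} \le \frac{P_{Z|X}(z|x_1)}{P_{Z|X}(z|x_2)} \le e^{\epsilon},
\]
so $|\ln(P_{Z|X}(z|x_1)/P_{Z|X}(z|x_2))| \le \epsilon$. Taking the maximum over $z$ and over pairs with $d_H(x_1,x_2)=1$ (the case $x_1=x_2$ contributes $0$ and can be ignored) shows that every valid $\epsilon$ satisfies $\epsilon \ge \epsilon^\star$.

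For the upper bound $DP(X;Z) \le \epsilon^\star$, I would verify that $\epsilon^\star$ itself satisfies the defining DP inequality. Given any $\mathcal{B}\subseteq\mathcal{Z}$ and any $x_1,x_2$ with $d_H(x_1,x_2)\le 1$, the definition of $\epsilon^\star$ gives the per-symbol bound $P_{Z|X}(z|x_1) \le e^{\epsilon^\star} P_{Z|X}(z|x_2)$ for every $z$. Summing this inequality over $z\in\mathcal{B}$ produces $\Pr(Z\in\mathcal{B}|X=x_1) \le e^{\epsilon^\star}\Pr(Z\in\mathcal{B}|X=x_2)$, which is exactly the $\epsilon^\star$-DP condition. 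The case $x_1=x_2$ is trivial.

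The only real subtlety, which I would handle carefully, is the zero-denominator behavior. If $P_{Z|X}(z|x_2)=0$ but $P_{Z|X}(z|x_1)>0$ for some $z$ and some adjacent pair, then by the stated convention $|\ln(c/0)|=\infty$ forces $\epsilon^\star=\infty$; the upper-bound inequality becomes vacuous and the lower-bound argument confirms that no finite $\epsilon$ can work, so $DP(X;Z)=\infty=\epsilon^\star$. If $P_{Z|X}(z|x_1)=P_{Z|X}(z|x_2)=0$, the convention $|\ln(0/0)|=0$ contributes nothing to the max and the per-symbol inequality $0\le e^{\epsilon^\star}\cdot 0$ holds trivially. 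These conventions are precisely what is needed to make the termwise summation in the upper-bound step go through in every case, and this bookkeeping is the only delicate point in an otherwise straightforward argument.
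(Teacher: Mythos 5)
Your proposal is correct and takes essentially the same approach as the paper: your termwise summation $P_{Z|X}(z|x_1) \le e^{\epsilon^\star}P_{Z|X}(z|x_2)$ over $z \in \mathcal{B}$ is exactly the content of the paper's mediant-style bound $\frac{\sum_{z\in\mathcal{B}} P_{Z|X}(z|x_1)}{\sum_{z\in\mathcal{B}} P_{Z|X}(z|x_2)} \le \max_{z\in\mathcal{B}}\frac{P_{Z|X}(z|x_1)}{P_{Z|X}(z|x_2)}$, and restricting to singleton $\mathcal{B}$ gives the reverse direction in both proofs. Your explicit bookkeeping of the $0/0$ and $c/0$ cases is a slightly more careful presentation of what the paper handles implicitly through its stated conventions.
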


It is well-known that $DP(X;Z)$ satisfies the post-processing
inequality~\cite{KiferLin-JPC12-Axioms}. However, we demonstrate via an
example that $DP(X;Z)$ does not satisfy the linkage inequality. This
has important philosophical implications on the use of differential
privacy which we then discuss.

\begin{proposition} \label{prop:DPexample}
The differential privacy measure $DP(X;Z)$ does not satisfy the linkage inequality.
\end{proposition}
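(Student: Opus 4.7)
My plan is to disprove the linkage inequality for $DP$ by exhibiting an explicit Markov chain $A \to B \to Z$ with $DP(A;Z) > DP(B;Z)$. The natural place to look is a setting where the secondary data $A$ induces strong correlations in the primary data $B$, so that a Hamming-1 change in $A$ forces simultaneous flips of multiple bits of $B$, amplifying the effect of the mechanism.

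Concretely, I would take $A \in \{0,1\}$ to be a uniform single bit (so that the DP domain $\{0,1\}^{n_A}$ for $A$ makes sense with $n_A = 1$) and let $B = (A,A) \in \{0,1\}^2$ be its two-fold repetition. This is a deterministic function of $A$, so $A \to B \to Z$ is trivially a Markov chain for any release mechanism driven by $B$. I would then take the mechanism $P_{Z|B}$ to be independent bitwise randomized response: $Z = (Z_1,Z_2)$ with $Z_i = B_i \oplus N_i$ where $N_1, N_2$ are i.i.d.\ Bernoulli$(p)$ with $p \in (0, 1/2)$.

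Next I would compute both privacy measures using the characterization in Lemma~\ref{lem:DPmetric}. For $DP(B;Z)$: any pair $b_1, b_2 \in \{0,1\}^2$ with $d_H(b_1,b_2) = 1$ differs in exactly one coordinate, and by the product form of $P_{Z|B}$ the likelihood ratio $P_{Z|B}(z|b_1)/P_{Z|B}(z|b_2)$ reduces to the single-coordinate ratio, which is $(1-p)/p$ or its reciprocal. Hence $DP(B;Z) = \ln\frac{1-p}{p}$. For $DP(A;Z)$: the only Hamming-1 pair in $\{0,1\}$ is $(0,1)$, and since $B=(A,A)$ is deterministic in $A$, the induced mechanism satisfies $P_{Z|A}(z|a) = P_{Z|B}(z|(a,a))$. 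Evaluating at $z = (0,0)$ gives $P_{Z|A}((0,0)|0) = (1-p)^2$ and $P_{Z|A}((0,0)|1) = p^2$, yielding a log-ratio of $2\ln\frac{1-p}{p}$. Thus $DP(A;Z) \geq 2\, DP(B;Z)$, which strictly exceeds $DP(B;Z)$ for any $p < 1/2$, contradicting the linkage inequality.

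There is no genuine technical obstacle here; the main task is just to identify an example in which the DP definition of neighboring inputs interacts pathologically with the correlation structure between $A$ and $B$. I would close by noting the operational moral that the calculation exposes: a DP guarantee on the primary sensitive data $B$ quantifies the effect of editing a single bit of $B$, but when $B$'s bits are replicas of a single bit of $A$, a single-bit change in $A$ is not a single-bit change in $B$, and the leakage about $A$ scales linearly with the level of replication. This makes clear why the same amplification phenomenon can (and in general does) break the linkage inequality for $DP$, while the post-processing inequality—which only concerns further processing of $Z$—remains intact.
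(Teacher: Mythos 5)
Your proposal is correct and establishes the claim, but via a different concrete counterexample than the paper's. The paper places both $A$ and $B$ in $\{0,1\}^2$, takes $B_1 = B_2 = A_1 \vee A_2$, uses a binary release $C$ governed by three free parameters $0 < q < r < s < 1$, and then extracts the violation by comparing likelihood ratios along Hamming-1 edges of the two hypercubes; the OR-based coupling is motivated as a toy disease-transmission model, which supports the paper's later discussion of data correlation in differentially private releases. Your construction is leaner: a single bit $A$ replicated into $B = (A,A)$, with bitwise randomized response as the mechanism. The factor-of-$2$ amplification, $DP(A;Z) = 2\,DP(B;Z)$, is then an explicit computation, and it makes the mechanism of failure maximally transparent -- a Hamming-1 move in $A$ forces a Hamming-2 move in $B$, and DP's sensitivity is measured one Hamming step at a time. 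Your derivation also generalizes cleanly (replicating $A$ into $k$ copies gives a factor of $k$), so it arguably conveys the scaling phenomenon more directly than the paper's fixed-size example, whereas the paper's version keeps $A$ and $B$ on the same database alphabet, which is closer to the vertically-partitioned-tables scenario it wants to discuss afterward. One minor note: your claim $DP(A;Z) \geq 2\,DP(B;Z)$ is in fact an equality here (the maximizing outputs are $z \in \{(0,0),(1,1)\}$), but the inequality is all that is needed.
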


The proof of Proposition~\ref{prop:DPexample} (see Appendix~\ref{app:DPexample}) constructs a simple example with 
databases $A, B \in \{0,1\}^2$, where $B := (B_1,B_2)$ is a deterministic function
of the database $A := (A_1,A_2)$, given by $B_1 = B_2 = A_1 \vee A_2$.
This example could be interpreted as a toy model for the spread of a contagious disease between two close relatives, where $A$ denotes the infection status of each person at an earlier time and $B$ at a later time, while simply depicting inevitable disease transmission.
The proof then constructs an example mechanism $P_{C|B}$ that when applied to $B$ (such that $A \to B \to C$ forms a Markov chain), we have $DP(A;C) > DP(B;C)$ showing violation of the linkage inequality.

More generally, the consequences of not satisfying the linkage inequality can impact situations where a dataset has been vertically partitioned over two tables $A$ and $B$ (each containing different attributes of the same population), or when a table $A$ is preprocessed to produce table $B$.
A differentially private release mechanism $P_{C|B}$ applied to the table $B$ may not guarantee the same level of privacy with respect to the potentially sensitive data in table $A$.
Since the effective release mechanism (overall channel) from $A$ to $C$ is given by $P_{C|A}(c|a) = \sum_{b \in \mathcal{B}} P_{C|B}(c|b) P_{B|A}(b|a)$, correlation across data tuples (as introduced by $P_{B|A}$) may cause $P_{C|A}$ to be less differentially private than $P_{C|B}$.
This realization is related to broader observations on the impact of data correlation on differential privacy guarantees and susceptibility to inference attacks (see~\cite{KiferM-2011-NoFreeLunch, LiuCM-2016-DDP} and references therein).

\section{Hierarchy of Privacy-Utility Tradeoffs under Data Constraints}
\label{sec:hierarchy}

In this section we establish a fundamental hierarchy for
data-release mechanisms in terms of their privacy-utility tradeoff
regions. In particular, we prove that the tradeoff region given only
sensitive data is contained within the tradeoff region given only
useful data.

For a given (fixed) distribution $P_{X,Y}$ between the sensitive and
private data, we can study how the optimal privacy-utility tradeoff
changes across the aforementioned three different cases of $W$.  This
is of practical interest, since the restrictions on $W$ in the
inference and output perturbation scenarios might be considered not
just when these situations inherently arise in the
given application, but also for simplifying mechanism design and
optimization.

Since the optimization problems of~\eqref{eqn:OP_opt_problem}
and~\eqref{eqn:INF_opt_problem} are equivalent to
\eqref{eqn:FD_opt_problem} with an additional Markov chain constraint,
we immediately have that $\pi_\text{FD}(\delta) \leq
\pi_\text{OP}(\delta)$ and $\pi_\text{FD}(\delta) \leq
\pi_\text{INF}(\delta)$ for any $\delta$.  This implies that the
achievable privacy-utility regions of both the inference scenario and
output perturbation scenario are contained within the achievable
privacy-utility region of the full data scenario, which intuitively
follows since in the full data scenario only more input data
available. The next theorem establishes the general relationship
between the inference and output perturbation tradeoff regions.

\begin{theorem} \label{theorem:order}
\textbf{(Output Perturbation better than Inference)} For any data
model $P_{X,Y}$, distortion measure $D(P_{Y,Z})$, and privacy measure
$J(X;Z)$ that satisfies the linkage inequality,
the achievable privacy-utility region for the output perturbation scenario (when $W
= Y$) contains the achievable privacy-utility region for the inference
scenario (when $W = X$), that is, $\pi_\text{OP}(\delta) \leq
\pi_\text{INF}(\delta)$ for any $\delta$.
\end{theorem}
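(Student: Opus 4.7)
The plan is to show $\pi_\text{OP}(\delta) \leq \pi_\text{INF}(\delta)$ by a direct mechanism-conversion argument: starting from any feasible inference mechanism, I construct a feasible output perturbation mechanism that achieves identical distortion and no larger privacy leakage. The key conceptual move is that the linkage inequality will have to be applied \emph{twice} --- once in each scenario --- with the bridge between the two applications being the privacy invariance property applied to a shared marginal.

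Concretely, fix any $P_{Z|X}$ feasible for \eqref{eqn:INF_opt_problem}, and write $P^{\text{INF}}_{X,Y,Z} := P_{X,Y} P_{Z|X}$. Under this joint law, $Y \to X \to Z$ forms a Markov chain by construction. Define the candidate output-perturbation mechanism by taking $P^{\text{OP}}_{Z|Y}(z|y) := \sum_{x} P_{Z|X}(z|x) P_{X|Y}(x|y)$, i.e., equal to the $Y \to Z$ channel induced by the inference mechanism (cf.~\eqref{eqn:ZgivenYlinear} with $W = X$). Let $P^{\text{OP}}_{X,Y,Z} := P_{X,Y} P^{\text{OP}}_{Z|Y}$, under which $X \to Y \to Z$ is Markov. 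The critical observation is that $P^{\text{INF}}_{Y,Z} = P^{\text{OP}}_{Y,Z}$ by construction, so the distortion constraint $D(P_{Y,Z}) \leq \delta$ transfers from the inference mechanism to the constructed output-perturbation mechanism, establishing feasibility in \eqref{eqn:OP_opt_problem}.

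It remains to compare privacy leakages. The following chain does the job:
\begin{equation*}
J^{\text{INF}}(X;Z) \;\stackrel{(a)}{\geq}\; J^{\text{INF}}(Y;Z) \;\stackrel{(b)}{=}\; J^{\text{OP}}(Y;Z) \;\stackrel{(c)}{\geq}\; J^{\text{OP}}(X;Z),
\end{equation*}
where superscripts indicate under which joint distribution the leakage is evaluated. Step $(a)$ applies the linkage inequality to the chain $Y \to X \to Z$ (valid under $P^{\text{INF}}$), with $(A,B,C) = (Y,X,Z)$. Step $(b)$ is privacy invariance, since the $(Y,Z)$-marginals under $P^{\text{INF}}$ and $P^{\text{OP}}$ coincide. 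Step $(c)$ applies the linkage inequality to $X \to Y \to Z$ (valid under $P^{\text{OP}}$), with $(A,B,C) = (X,Y,Z)$. Hence the constructed output-perturbation mechanism attains leakage no larger than the inference mechanism it came from. Taking the infimum over all feasible $P_{Z|X}$ gives $\pi_\text{OP}(\delta) \leq \pi_\text{INF}(\delta)$.

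I do not expect a substantive obstacle; the proof is essentially a bookkeeping argument once the two Markov structures and the shared $(Y,Z)$-marginal are recognized. The only subtlety worth highlighting in the write-up is why one cannot simply invoke the post-processing inequality: evaluating $J^{\text{INF}}(X;Z)$ and $J^{\text{OP}}(X;Z)$ involves two \emph{different} joint distributions on $(X,Z)$, so one must pass through a quantity --- here $J(Y;Z)$ --- that is unambiguous across the two models, and this is precisely what the linkage inequality (rather than post-processing) provides for a non-symmetric measure $J$.
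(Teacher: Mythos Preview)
Your proposal is correct and follows essentially the same argument as the paper's proof: convert an arbitrary feasible inference mechanism into an output-perturbation mechanism with identical $(Y,Z)$-marginal, then chain two applications of the linkage inequality through $J(Y;Z)$ to compare privacy leakages. The only cosmetic difference is that the paper distinguishes the two scenarios via a primed variable $Z'$ rather than superscripts on $J$.
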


\noindent The proof of Theorem~\ref{theorem:order} is presented in Appendix~\ref{app:OrderProof}.

Combining the preceding theorem with the earlier observations, we have
that $\pi_\text{FD}(\delta) \leq \pi_\text{OP}(\delta) \leq
\pi_\text{INF}(\delta)$ for any $\delta$. Thus, in general, full data
offers a better privacy-utility tradeoff than output perturbation,
which in turn offers a better privacy-utility tradeoff than inference.

The next theorem establishes that for a certain class of joint
distributions $P_{X,Y}$, the full data and output perturbation
scenarios have the same optimal privacy-utility tradeoff.  Thus,
for this class of $P_{X,Y}$, the full data mechanism design can be
simplified to the design of an output perturbation mechanism, which can
ignore the sensitive data $X$ without degrading the privacy-utility
performance.  Specifically, this class is characterized by those joint
distributions $P_{X,Y}$ for which common information $C(X;Y) = I(X;Y)$.
Some of the key properties of common information that are needed for proving
Theorems~\ref{theorem:OPoptimality} and~\ref{theorem:OPConverse}
are summarized in Appendix~\ref{app:CommonInfoProps}.

\begin{theorem} \label{theorem:OPoptimality}
\textbf{(Sufficient Conditions for the Optimality of Output Perturbation)}
For any distortion measure $D(P_{Y,Z})$, any privacy measure $J(X;Z)$ that satisfies the linkage inequality, and
any data model $P_{X,Y}$ where $C(X;Y) = I(X;Y)$, the achievable
privacy-utility region for the output perturbation scenario (when $W
= Y$) is the same as the achievable privacy-utility region for the full
data scenario (when $W = (X,Y)$), that is, $\pi_\text{OP}(\delta) =
\pi_\text{FD}(\delta)$ for any distortion measure and any $\delta$.
\end{theorem}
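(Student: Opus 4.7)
The direction $\pi_\text{FD}(\delta) \leq \pi_\text{OP}(\delta)$ is already noted (OP is FD with an extra Markov constraint), so the plan is to prove $\pi_\text{OP}(\delta) \leq \pi_\text{FD}(\delta)$ by showing that any feasible full-data mechanism can be mimicked by an output-perturbation mechanism that preserves $P_{Y,Z}$ (hence the distortion) and does not increase the privacy leakage $J(X;Z)$. The workhorse will be a random variable $U$ extracted from the hypothesis $C(X;Y) = I(X;Y)$: as will be recorded in Appendix~\ref{app:CommonInfoProps}, under either the Wyner or G\'acs--K\"orner notion this equality forces the existence of a single $U$ that is a deterministic function of both $X$ and $Y$ (so $U = f(X) = g(Y)$) with $X \perp Y \mid U$.

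Given any $P_{Z|X,Y}$ with $D(P_{Y,Z}) \leq \delta$, I would construct the output-perturbation mechanism
\[
P_{Z'|Y}(z'|y) := \sum_{x} P_{X|Y}(x|y)\, P_{Z|X,Y}(z'|x,y).
\]
A direct calculation shows $P_{Y,Z'} = P_{Y,Z}$, so the distortion is unchanged and $P_{Z'|Y}$ is feasible for the OP problem. Using $P_{X|Y}(x|y) = P_{X|U}(x|g(y))$ (because $X \perp Y \mid U$ and $U$ is determined by $Y$), the same calculation, averaged over $Y$ given $(X,U)$, collapses to
\[
P_{Z'|X,U}(z'|x,u) = \sum_{x',y} P_{X|U}(x'|u)\, P_{Y|U}(y|u)\, P_{Z|X,Y}(z'|x',y) = P_{Z|U}(z'|u),
\]
which has two consequences I will use: (i) $X \to U \to Z'$ is a Markov chain; and (ii) the pairs $(U,Z')$ and $(U,Z)$ are identically distributed, so privacy invariance gives $J(U;Z') = J(U;Z)$.

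Now I sandwich with the linkage inequality. Applying it to the Markov chain $X \to U \to Z'$ gives $J(X;Z') \leq J(U;Z')$, and applying it to $U \to X \to Z$ (trivially Markov since $U = f(X)$) gives $J(U;Z) \leq J(X;Z)$. Chaining the three steps yields
\[
J(X;Z') \;\leq\; J(U;Z') \;=\; J(U;Z) \;\leq\; J(X;Z),
\]
so $P_{Z'|Y}$ is a feasible OP mechanism whose privacy leakage does not exceed that of the original FD mechanism. Taking infima over feasible $P_{Z|X,Y}$ delivers $\pi_\text{OP}(\delta) \leq \pi_\text{FD}(\delta)$, completing the equality together with the easy reverse bound.

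The main subtlety I anticipate is verifying the Markov property $X \to U \to Z'$; this is exactly where both ingredients provided by $C(X;Y) = I(X;Y)$ are essential — determinism of $U$ as a function of $Y$ (so that conditioning on $U$ is consistent with the OP construction) and conditional independence $X \perp Y \mid U$ (so that the $X$-dependence of $P_{Z'|Y}$ averages out). Once that Markov chain is in hand, both invocations of the linkage inequality are standard, which is precisely why the hypothesis on $J$ from Theorem~\ref{theorem:order} reappears.
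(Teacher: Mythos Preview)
Your proof is correct and follows essentially the same approach as the paper: you construct the same output-perturbation mechanism $P_{Z'|Y}$, exploit the common part $U$ (with $X \to U \to Y$ Markov) to establish $X \to U \to Z'$, and sandwich $J(X;Z')$ and $J(X;Z)$ around $J(U;Z') = J(U;Z)$ via the linkage inequality. The only cosmetic differences are that the paper verifies $X \to U \to Z'$ via a mutual-information chain-rule argument rather than your direct computation, and it routes the linkage chain through the intermediate variable $(X,U)$ in four steps rather than your two---but since $U$ is a function of $X$ these extra steps are redundant, and your version is a cleaner rendering of the same idea.
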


\noindent The proof of Theorem~\ref{theorem:OPoptimality} is presented in Appendix~\ref{app:OPoptimalityProof}.

Theorem~\ref{theorem:OPoptimality} establishes that $C(X;Y) = I(X;Y)$ is a sufficient condition on $P_{X,Y}$ such that, for any general distortion measure, full data mechanisms cannot provide better privacy-utility tradeoffs than the output perturbation mechanisms.
Our next theorem gives the converse result, establishing that for data models where $C(X;Y) \neq I(X;Y)$, output perturbation mechanisms are generally suboptimal, that is, there exists a distortion measure such that the full data mechanisms provide a strictly better privacy-utility tradeoff.

\begin{theorem} \label{theorem:OPConverse}
\textbf{(Necessary Conditions for the Optimality of Output Perturbation)}
For any data model $P_{X,Y}$ where $C(X;Y) \neq I(X;Y)$,
there exists a distortion measure $D(P_{Y,Z})$ such that the achievable
privacy-utility region for the output perturbation scenario (when $W
= Y$) is strictly smaller than the achievable privacy-utility region
for the full data scenario (when $W = (X,Y)$), that is, there exists
$\delta \geq 0$ such that $\pi_\text{OP}(\delta) >
\pi_\text{FD}(\delta)$.
\end{theorem}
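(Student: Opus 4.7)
My strategy is to take the conditional entropy $D(P_{Y,Z}) := H(Y|Z)$ as the distortion measure, so that the constraint $D(P_{Y,Z}) \leq \delta$ is equivalent to $I(Y;Z) \geq H(Y) - \delta$. Define
\[
M_\text{FD} := \sup \{I(Y;Z) : \exists\, P_{Z|X,Y} \text{ with } X \perp Z\},
\]
\[
M_\text{OP} := \sup \{I(Y;Z) : \exists\, P_{Z|Y} \text{ with } X \perp Z \text{ under } X \to Y \to Z\}.
\]
By the perfect-privacy-is-independence property of $J$, $J(X;Z) = 0$ if and only if $X \perp Z$. Therefore, if I can establish $M_\text{FD} > M_\text{OP}$, then any $\delta$ strictly between $H(Y) - M_\text{FD}$ and $H(Y) - M_\text{OP}$ will satisfy $\pi_\text{FD}(\delta) = 0$ (some FD mechanism attains the distortion with $J = 0$) while $\pi_\text{OP}(\delta) > 0$ (no OP mechanism attains the distortion with $J = 0$), establishing the desired strict gap.

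\textbf{Construction.} The core task is to show $M_\text{FD} > M_\text{OP}$ whenever $C(X;Y) \neq I(X;Y)$. The intuition is that the OP supremum is limited because enforcing $X \perp Z$ through a $Y$-only channel forces $P_{Z|Y}$ to map every conditional $\{P_{Y|X=x}\}_x$ to the same output distribution, whereas the FD supremum allows $P_{Z|X,Y}$ to use $X$ to cancel its own influence on $Z$ while still carrying nontrivial information about $Y$. A concrete prototype is the doubly symmetric binary source ($|\mathcal{X}| = |\mathcal{Y}| = 2$, crossover $p \in (0, \tfrac{1}{2})$, for which $C(X;Y) > I(X;Y)$): here $M_\text{OP} = 0$, since the $2\times 2$ matrix $P_{Y|X}$ has trivial kernel and forces $P_{Z|Y}$ to be $y$-constant, whereas choosing $P_{Z|X,Y}(1|x,y) = q_{xy}$ with $q_{00} \neq q_{11}$ and $(1-p)(q_{00}-q_{11}) = p(q_{10}-q_{01})$ yields an FD mechanism with $X \perp Z$ and $I(Y;Z) > 0$.

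\textbf{Main obstacle.} The principal difficulty is to show $M_\text{FD} > M_\text{OP}$ for every $P_{X,Y}$ with $C(X;Y) \neq I(X;Y)$, not just for the symmetric prototype above. I would argue this via the contrapositive: if $M_\text{FD} = M_\text{OP}$, then every FD mechanism with $X \perp Z$ admits an OP counterpart with the same $(Y,Z)$-marginal and $X \perp Z$. This should force, via the structural properties of common information in Appendix~\ref{app:CommonInfoProps} (in particular the double Markov lemma applied to the optimal extractor), the existence of a deterministic common random variable $V = f(X) = g(Y)$ with $H(V) = I(X;Y)$, i.e., $K(X;Y) = I(X;Y)$, which at equality is equivalent to $C(X;Y) = I(X;Y)$. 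The simulation argument of Theorem~\ref{theorem:OPoptimality} provides the template in one direction (from $C = I$ to equality of suprema); reversing it rigorously, passing from the equivalence of extraction suprema back to the double-Markov structure on $P_{X,Y}$, is expected to be the most delicate step of the proof.
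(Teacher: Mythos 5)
Your plan differs fundamentally from the paper's, and it contains a genuine gap precisely where you flag the ``main obstacle.'' The paper does not work with $H(Y|Z)$ as the distortion. It instead exploits the latitude the theorem grants (``there exists a distortion measure'') by rigging the measure: it proves Lemma~\ref{lemma:NecessaryCond}, which constructs a single full-data mechanism $P_{Z|X,Y}$ with $I(X;Z)=0$ whose induced output-perturbation ``projection'' $P_{Z'|Y}$ (the unique OP mechanism with the same $(Y,Z)$-marginal) has $I(X;Z')>0$; it then defines $D(P_{Y,Z})$ to equal $1$ at this particular $P_{Y,Z'}$ and $2$ elsewhere, with $\delta=1$. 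This forces the OP optimization to use exactly $P_{Z'|Y}$, which has positive leakage, while the FD optimization admits $P_{Z|X,Y}$ with zero leakage. No ``global supremum'' comparison is ever needed. The construction underlying Lemma~\ref{lemma:NecessaryCond} is an explicit binary-$Z$ perturbation built from the symbols $(x_0,x_1,y_0,y_1)$ guaranteed by Lemma~\ref{lemma:CommonPartProperty}, which is the only structural consequence of $C(X;Y)\neq I(X;Y)$ actually invoked.

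Your approach would instead require proving $M_\text{FD} > M_\text{OP}$ for every $P_{X,Y}$ with $C(X;Y)\neq I(X;Y)$, which is a strictly stronger claim than Lemma~\ref{lemma:NecessaryCond}: the lemma shows one FD mechanism beats its own OP projection, but does not rule out that some other OP mechanism with $X\perp Z''$ attains $I(Y;Z'')\geq I(Y;Z)$. Your DSBS prototype works because $P_{Y|X}$ has trivial kernel, forcing $M_\text{OP}=0$; but for general $P_{X,Y}$ with $C\neq I$, $P_{Y|X}$ may have nontrivial kernel (e.g., $Y$ contains a component independent of $X$), so $M_\text{OP}$ can be strictly positive and the comparison is not obviously resolvable by the kernel argument. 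Your sketch of the contrapositive appeals to a ``double Markov lemma applied to the optimal extractor'' that is not in the paper's toolkit and is not fleshed out to the point where one can check it. You have correctly identified where the difficulty lies, but that difficulty is the whole theorem; what you have is a plan, not a proof. If you want to keep the natural distortion measure $H(Y|Z)$, you would first need to establish $M_\text{FD}>M_\text{OP}$ (and also handle attainment of the suprema over unbounded release alphabets); alternatively, adopt the paper's trick of tailoring the distortion measure to a single constructed counterexample, which makes the argument short and local.
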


\noindent The proof of Theorem~\ref{theorem:OPConverse} is presented in Appendix~\ref{app:OPConverseProof}.

\section{Analytical Privacy-Utility Tradeoff Examples}
\label{sec:example}

In this section, we consider an example data model $P_{X,Y}$ and analytically derive the optimal privacy-utility tradeoffs under the full data, output perturbation, and inference scenarios.
For this example, we use mutual information as the privacy measure and probability of error as the distortion measure, i.e., $J(X;Z) = I(X;Z)$ and $D(P_{Y,Z}) = \Pr(Y \neq Z)$, where $Z$ is the released data.
Our particular toy data model assumes that the sensitive data $X$ and useful data $Y$ are discrete random variables on the same finite set $\mathcal X = \mathcal Y = \{0,\ldots,m-1\}$, with the joint distribution
\begin{align}\label{symmetric_pmf1} 
P_{X,Y}(x,y) = 
\begin{cases}
\frac{1-p}{m}, & \text{if } x=y, \\
\frac{p}{m (m-1)}, & \text{otherwise},
\end{cases}
\end{align}
where the distribution parameters $p \in [0,1]$ and $m \in \mathbb{Z}$ with $m \geq 2$.
We will call the joint distribution in~\eqref{symmetric_pmf1} the \emph{symmetric pair} and use the notation $(X,Y) \sim SP(m, p)$.

The symmetric pair distribution can be viewed as a generalization of the binary symmetric source to an $m$-ary alphabet.
The parameter $p$ is analogous to the cross-over probability and equal to $\Pr(X \neq Y)$.
Note that both $X$ and $Y$ are marginally uniform and that the joint distribution could be equivalently defined via the channel
\begin{align}\nonumber
Y = X + N \mod m,
\end{align}
where $N \in \{0, \ldots, m-1\}$ is independent additive noise with the distribution
\begin{align}
\label{eqn:SP_noise}
P_{N}(n) = 
\begin{cases}
1-p, & \text{if } n=0 \\
\frac{p}{m-1}, & \text{otherwise.}
\end{cases}
\end{align}

The mutual information of the symmetric pair distribution, which we denote as a function $r_m(p)$ of the distribution parameters $m$ and $p$, is given by the next lemma and used extensively in the tradeoff results and proofs.

\begin{lemma}
\label{lemma:sp_mut_computation}
\textbf{(Mutual Information of Symmetric Pair)}
If $(X,Y) \sim SP(m,p)$, then
\begin{equation*}
I(X;Y) = \log m - p \log(m-1) - h_2(p) =: r_m(p),
\end{equation*}
where $h_2(p) := -p \log p - (1-p) \log (1-p)$ is the binary entropy function.
\end{lemma}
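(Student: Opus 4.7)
The plan is to compute $I(X;Y) = H(Y) - H(Y|X)$ directly by exploiting the additive noise representation $Y = X + N \bmod m$ that is already given in the statement preceding the lemma.

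First I would observe that the marginal of $Y$ (and of $X$) is uniform on $\{0,\ldots,m-1\}$. This follows either by summing $P_{X,Y}$ in~\eqref{symmetric_pmf1} over $x$, giving $P_Y(y) = (1-p)/m + (m-1)\cdot p/(m(m-1)) = 1/m$, or by noting that a uniform $X$ passed through any additive channel modulo $m$ produces a uniform output. Hence $H(Y) = \log m$.

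Next I would compute $H(Y|X)$ using the independence of $N$ from $X$ in the representation $Y = X + N \bmod m$: for each $x$, conditioning on $X=x$ simply shifts $N$, so $H(Y|X=x) = H(N)$ for every $x$, giving $H(Y|X) = H(N)$. From~\eqref{eqn:SP_noise},
\begin{equation*}
H(N) = -(1-p)\log(1-p) - \sum_{n=1}^{m-1} \frac{p}{m-1}\log\frac{p}{m-1}.
\end{equation*}
The last sum collapses to $-p\log(p/(m-1)) = -p\log p + p\log(m-1)$, so $H(N) = h_2(p) + p\log(m-1)$.

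Combining the two pieces,
\begin{equation*}
I(X;Y) = H(Y) - H(Y|X) = \log m - h_2(p) - p\log(m-1),
\end{equation*}
which is exactly $r_m(p)$. There is no real obstacle here; the only thing to be slightly careful about is the boundary cases $p=0$ and $p=1$ (where some terms in $h_2$ or $p\log(m-1)$ need the standard convention $0\log 0 = 0$), but these are handled by the usual convention in information theory and the formula remains valid by continuity. If one prefers a derivation directly from~\eqref{symmetric_pmf1} without invoking the additive-noise representation, the identical answer drops out by expanding $I(X;Y) = \sum_{x,y} P_{X,Y}(x,y) \log (P_{X,Y}(x,y)/(P_X(x)P_Y(y)))$, splitting the sum into the $m$ diagonal terms and the $m(m-1)$ off-diagonal terms.
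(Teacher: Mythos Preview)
Your proof is correct and follows essentially the same approach as the paper: both use $I(X;Y) = H(Y) - H(Y|X)$, invoke the additive-noise representation to get $H(Y|X) = H(N)$, and evaluate $H(Y) = \log m$ and $H(N) = h_2(p) + p\log(m-1)$. Your version is a bit more explicit in justifying the uniformity of $Y$ and the conditional-entropy step, but the argument is identical in substance.
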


\begin{proof} We have that
\begin{align*}
I(X;Y) &= H(Y) - H(Y|X) \\
&= H(Y) - H(N) \\
&= \log m + (1-p) \log (1-p) + p \log \frac{p}{m-1} \\
&= \log m - p \log(m-1) - h_2(p),
\end{align*}
where $N$ is independent noise given by~\eqref{eqn:SP_noise}.
\end{proof}

For our example data model, the next three theorems provide the analytically derived optimal privacy-utility tradeoffs under the full data, output perturbation, and inference scenarios.
Note that for any distortion constraint $\delta \geq 1-\frac{1}{m}$, we can immediately achieve perfect privacy, i.e., $\pi_\text{FD}(\delta) = \pi_\text{INF}(\delta) = \pi_\text{OP}(\delta) = 0$, via the mechanism that trivially releases $Z$ that is independent of $(X,Y)$ and uniform over $\mathcal{Y}$, which obtains distortion $\Pr(Y \neq Z) = 1-\frac{1}{m} \leq \delta$ and perfect privacy $I(X;Z) = 0$.

\begin{theorem} \label{thm:xy_available}
\textbf{(Full Data Privacy-Utility Tradeoff for the Symmetric Pair Distribution)}
With mutual information as the privacy measure, $J(X;Z) = I(X;Z)$, and probability of error as the distortion measure, $D(P_{Y,Z}) = \Pr(Y \neq Z)$, if the data model is $(X,Y) \sim SP(m,p)$, then
the optimal privacy-utility tradeoff for the full data scenario in~\eqref{eqn:FD_opt_problem} is given by
\begin{align} \label{eqn:SP_FD_region}
\pi_\text{FD}(\delta) =
\begin{cases}
r_m(p + \delta), & \text{if } \delta \leq 1-\frac{1}{m} - p, \\
r_m(p - \delta), & \text{if } \delta \leq p - (1 - \frac{1}{m}), \\
0, & \text{otherwise}.
\end{cases}
\end{align}
For $p \leq 1-\frac{1}{m}$, the optimal mechanism $P_{Z|X,Y}$ is defined by
\begin{align}\label{remark:general_case_solving_rv1}
Z :=
\begin{cases}
Y + N \mod m, & \text{if } X = Y,\\
Y, & \text{otherwise},
\end{cases}
\end{align}
where $N \in \{0, \ldots, m-1\}$ is independent of $(X,Y)$ with the distribution
\begin{align*}
P_{N}(n) = 
\begin{cases}
1-\frac{t}{1-p}, & \text{if } n = 0 \\
\frac{t}{(1-p)(m-1)}, & \text{otherwise},
\end{cases}
\end{align*}
where $t:= \min(1-\frac{1}{m}-p,\delta)$.
\end{theorem}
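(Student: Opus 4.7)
The plan is to match achievability and converse. Achievability splits into two constructions depending on whether $p \leq 1 - 1/m$ or $p > 1 - 1/m$, while a single Fano-plus-Jensen step handles the converse uniformly for all three cases.

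For achievability when $p \leq 1 - 1/m$, I verify that the mechanism given in the theorem attains the claimed tradeoff. Conditioning on $\{X = Y\}$ and $\{X \neq Y\}$ immediately gives $\Pr(Y \neq Z) = (1 - p)\Pr(N \neq 0) = t \leq \delta$, and decomposing $P_{Z|X}$ over the same two events yields $P_{Z|X}(x|x) = 1 - p - t$ and $P_{Z|X}(z|x) = (p + t)/(m - 1)$ for $z \neq x$; thus $(X, Z) \sim SP(m, p + t)$, and Lemma~\ref{lemma:sp_mut_computation} gives $I(X; Z) = r_m(p + t)$. The choice $t = \min(1 - 1/m - p, \delta)$ simultaneously handles case 1 (producing $r_m(p + \delta)$) and case 3 (producing $r_m(1 - 1/m) = 0$). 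For $p > 1 - 1/m$ the theorem does not specify the optimal mechanism, so I introduce an ``anti-matching'' construction: set $Z = Y$ when $X = Y$, and when $X \neq Y$ release $Z = X$ with probability $\delta/p$ and $Z = Y$ otherwise. An analogous two-event decomposition shows $\Pr(Y \neq Z) = \delta$ and $(X, Z) \sim SP(m, p - \delta)$, yielding case 2, with an even smaller choice recovering case 3 when $\delta > p - (1 - 1/m)$.

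For the converse, I establish the key inequality $I(X; Z) \geq r_m(\Pr(X \neq Z))$ for any $Z$ whenever $X$ is marginally uniform on $\mathcal{X}$. With $q_z := P_{X|Z}(z|z)$, the maximum-entropy distribution on $\mathcal{X}$ under the constraint $P(X = z) = q_z$ spreads the remaining mass uniformly over the $m - 1$ other values, giving $H(X \mid Z = z) \leq h_2(q_z) + (1 - q_z)\log(m - 1) = \log m - r_m(1 - q_z)$; averaging over $Z$ and applying Jensen's inequality to the convex function $r_m$ yields $I(X; Z) \geq E[r_m(1 - q_Z)] \geq r_m(E[1 - q_Z]) = r_m(\Pr(X \neq Z))$. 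The triangle inequality for Hamming distance forces $\Pr(X \neq Z) \in [\max(0, p - \delta), \min(1, p + \delta)]$, and because $r_m$ is U-shaped with unique minimum at $1 - 1/m$, its infimum over this interval produces $r_m(p + \delta)$ when $p + \delta \leq 1 - 1/m$, $r_m(p - \delta)$ when $p - \delta \geq 1 - 1/m$, and $0$ when $1 - 1/m$ lies inside, matching~\eqref{eqn:SP_FD_region} in all three cases.

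The main obstacle is the converse in case 2 (more generally, whenever $\Pr(X \neq Z)$ exceeds $1 - 1/m$), where a classical Fano-style lower bound relying on monotonicity of $r_m$ on $[0, 1 - 1/m]$ collapses to the trivial $I(X; Z) \geq 0$. The Jensen route above circumvents this by exploiting global convexity of $r_m$ across its entire U-shape, so that the uniform inequality $I(X; Z) \geq r_m(\Pr(X \neq Z))$ can then be optimized over the triangle-inequality interval to produce all three pieces of the formula simultaneously.
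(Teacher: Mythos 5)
Your proof is correct and the key ingredients match those in the paper: a Fano-type bound $I(X;Z)\geq r_m(\Pr(X\neq Z))$, a triangle-inequality-style bound on $\Pr(X\neq Z)$ in terms of $p$ and $\delta$, and explicit mechanisms that make $(X,Z)$ a symmetric pair with the target crossover parameter. The differences are presentational rather than substantive. For the converse, the paper splits into $p\leq 1-1/m$ and $p>1-1/m$ and invokes two separate constrained-optimization lemmas (one for the constraint $\Pr(X\neq Z)\leq \epsilon$ using monotone decrease of $r_m$ below $1-1/m$, the other for $\Pr(X\neq Z)\geq \epsilon$ using monotone increase above it), after bounding $\Pr(X\neq Z)$ from above in the first case via an exact error-decomposition identity and from below in the second case via a union bound. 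You collapse these into a single step: the uniform Fano bound plus the two-sided interval $\Pr(X\neq Z)\in[\max(0,p-\delta),\min(1,p+\delta)]$ plus convexity of $r_m$, which produces all three pieces of~\eqref{eqn:SP_FD_region} at once. This is cleaner, though your framing that the ``classical Fano-style lower bound \ldots\ collapses'' slightly mischaracterizes the paper, whose bound is the same uniform $I(X;Z)\geq r_m(\Pr(X\neq Z))$; what differs is only how the subsequent scalar optimization over $\Pr(X\neq Z)$ is organized. For achievability, your ``anti-matching'' mechanism in case~2 turns out to be the same channel $P_{Z|X,Y}$ as the paper's independent-coin construction (the coin's outcome is irrelevant on $\{X=Y\}$), just described via conditioning on $\{X=Y\}$; your aside about choosing ``an even smaller'' randomization parameter when $\delta>p-(1-1/m)$ is correct but should be stated explicitly as capping the effective crossover at $1-1/m$, as the paper does with $t'=\max(p-\delta,1-1/m)$.
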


\noindent The proof of Theorem~\ref{thm:xy_available}
is presented in Appendix~\ref{app:proof_xy_available}.

Observe that in the case of $p \leq 1-\frac{1}{m}$, the optimal mechanism given by~\eqref{remark:general_case_solving_rv1} illustrates that given $Y$ only one bit of additional information about $X$ is needed (namely, whether or not $X = Y$) in order obtain the optimal privacy-utility tradeoff for the full data scenario.

\begin{theorem} \label{thm:y_available}
\textbf{(Output Perturbation Privacy-Utility Tradeoff for the Symmetric Pair Distribution)}
With mutual information as the privacy measure, $J(X;Z) = I(X;Z)$, and probability of error as the distortion measure, $D(P_{Y,Z}) = \Pr(Y \neq Z)$, if the data model is $(X,Y) \sim SP(m,p)$, then
the optimal privacy-utility tradeoff for the output perturbation scenario in~\eqref{eqn:OP_opt_problem} is given by
\begin{align}\label{eqn:SP_OP_region}
\pi_\text{OP}(\delta) =
\begin{cases}
r_m \left( p + \delta \left(1 - \frac{pm}{m-1} \right) \right), & \text{if } \delta < 1-\frac{1}{m},\\
0, & \text{otherwise}.
\end{cases}
\end{align}
The optimal mechanism is given by $Z := Y + N \mod m$, where $N \in \{0, \ldots, m-1\}$ is independent of $(X,Y)$ with the distribution
\begin{align}\label{eqn:OP_opt_mech}
P_{N}(n) = 
\begin{cases}
1-t, & \text{if } n = 0 \\
\frac{t}{m-1}, & \text{otherwise},
\end{cases}
\end{align}
where $t := \min(\delta, 1-\frac{1}{m})$.
\end{theorem}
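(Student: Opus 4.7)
The plan is to establish Theorem~\ref{thm:y_available} by a combination of direct achievability and a two-stage symmetrization converse. For achievability, I would verify directly that the proposed mechanism $Z = Y + N \bmod m$ with $N$ of the symmetric-pair form (parameter $t = \min(\delta, 1 - 1/m)$) achieves $\Pr(Y \neq Z) = \Pr(N \neq 0) = t \leq \delta$. For the privacy term, since $Y = X + N_1$ (with $N_1$ the symmetric-pair noise of parameter $p$ from \eqref{eqn:SP_noise}) and $N$ is independent of $(X,Y)$, I would write $Z = X + (N_1 + N)$ and compute the convolution to show $N_1 + N$ is again symmetric-pair noise of parameter $q := p + t(1 - pm/(m-1))$. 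Since $N_1 + N$ is independent of $X$ and $P_Z$ is uniform, $I(X;Z) = \log m - H(N_1 + N) = r_m(q)$ by Lemma~\ref{lemma:sp_mut_computation}, matching~\eqref{eqn:SP_OP_region}.

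For the converse, I would use two symmetrizations. First, since $P_{X,Y}$ is invariant under the diagonal cyclic shift $(X,Y) \mapsto (X+k, Y+k)$, for any feasible $P_{Z|Y}$ the shifted mechanism $P^{(k)}_{Z|Y}(z|y) := P_{Z|Y}(z - k \mid y - k)$ attains the same distortion and the same privacy; then convexity of $I(X;Z)$ in $P_{Z|Y}$ (see Section~\ref{sec:convexity}) together with linearity of $\Pr(Y \neq Z)$ shows that the average $\bar P_{Z|Y} = \frac{1}{m}\sum_{k=0}^{m-1} P^{(k)}_{Z|Y}$ does no worse, and is manifestly shift-invariant, hence of the form $Z = Y + N$ with $N$ independent of $(X,Y)$. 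Second, letting $t := \Pr(N \neq 0)$, the convolution formula gives $(N_1 + N)(j) = \alpha P_N(j) + \beta$ for $j \neq 0$, where $\alpha = (1-p) - p/(m-1)$ and $\beta = p/(m-1)$ do not depend on $j$, while $(N_1 + N)(0)$ depends only on $t$. Since $\sum_{j \neq 0} P_N(j) = t$ is fixed, concavity of entropy forces the maximizer of $H(N_1 + N)$ (over $P_N$ with fixed $t$) to be $P_N(j) = t/(m-1)$, so that $N_1 + N$ is symmetric-pair with parameter $q(t) = p + t(1 - pm/(m-1))$. Minimizing $I(X;Z) = r_m(q(t))$ over $t \in [0, \delta]$ then reduces, via the monotonicity of $r_m$ and the identity $q(1-1/m) = 1-1/m$, to $t = \min(\delta, 1-1/m)$.

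The main obstacle is the second symmetrization stage: the concavity argument must be applied to the convolved distribution $P_{N_1 + N}$, not directly to $P_N$, and the fact that the affine map $P_N(j) \mapsto (N_1 + N)(j)$ has a slope $\alpha$ independent of $j$ is what lets the ``all-equal'' optimizer survive the convolution. The degenerate case $\alpha = 0$ (i.e., $p = 1 - 1/m$) must also be handled, where $N_1 + N$ is automatically uniform and $\pi_\text{OP}(\delta) \equiv 0$. After these reductions, the final one-dimensional optimization in $t$ and the verification that $t = 1-1/m$ achieves $q = 1 - 1/m$ (so $\pi_\text{OP}(\delta) = 0$ for $\delta \geq 1 - 1/m$) are routine.
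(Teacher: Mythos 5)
Your proposal is correct, and the converse takes a genuinely different route from the paper's. The paper proves the converse by invoking Lemma~\ref{lemma:error_relation_markov} to rewrite the constraint $\Pr(Y \neq Z) \leq \delta$ as an equivalent constraint on $\Pr(X \neq Z)$ under the Markov chain $X \to Y \to Z$, and then applies the Fano-type Lemmas~\ref{lemma_fano1} and~\ref{fano_lower_bound2} (the two cases $p < 1-\tfrac{1}{m}$ and $p > 1-\tfrac{1}{m}$ govern whether the induced constraint on $\Pr(X\neq Z)$ is an upper or a lower bound). Your two-stage symmetrization avoids Fano entirely: first, averaging over diagonal cyclic shifts --- valid because $SP(m,p)$ is shift-invariant, $\Pr(Y\neq Z)$ is linear in $P_{Z|Y}$, and $I(X;Z)$ is convex in $P_{Z|Y}$ since $P_{Z|X}$ is linear in $P_{Z|Y}$ --- forces the optimal mechanism to have the form $Z = Y + N$ with $N$ independent of $(X,Y)$; second, the affine map $P_N(j) \mapsto P_{N_1+N}(j) = \alpha P_N(j) + \beta$ with slope $\alpha = 1 - \tfrac{pm}{m-1}$ constant over $j \neq 0$, combined with concavity and permutation-invariance of entropy on the $m-1$ nonzero coordinates, forces $N$ to be symmetric with $P_N(j)=t/(m-1)$, after which the one-dimensional minimization of $r_m(q(t))$ over $t\in[0,\delta]$ gives the stated answer. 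Both arguments are sound; the paper's is shorter and reuses its Fano lemmas across Theorems~\ref{thm:xy_available}--\ref{thm:x_available}, while yours is self-contained and derives rather than merely verifies the form of the optimal mechanism. You correctly flag the two delicate points: the degenerate slope $\alpha=0$ (i.e., $p=1-\tfrac{1}{m}$, where $N_1+N$ is automatically uniform and $\pi_\text{OP}\equiv 0$), and that the concavity/symmetry argument must be applied to the convolved law $P_{N_1+N}$ rather than to $P_N$ directly, with the constant-slope property of the convolution being exactly what makes the uniform optimizer of the $q_j$'s pull back to the uniform choice of $P_N$ on $j\neq 0$.
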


\noindent The proof of Theorem~\ref{thm:y_available}
is presented in Appendix~\ref{app:proof_y_available}.

For the output perturbation scenario, the optimal mechanism given in Theorem~\ref{thm:y_available} simply adds noise (see~\eqref{eqn:OP_opt_mech}) that results in a probability of error $\Pr(Y \neq Z)$ equal to the distortion budget $\delta$ (when it is less than $1-\frac{1}{m}$).
Note that this mechanism does not depend on the parameter $p$, and hence tolerates some statistical uncertainty regarding $(X,Y)$.

\begin{theorem}\label{thm:x_available}
\textbf{(Inference Privacy-Utility Tradeoff for the Symmetric Pair Distribution)}
With mutual information as the privacy measure, $J(X;Z) = I(X;Z)$, and probability of error as the distortion measure, $D(P_{Y,Z}) = \Pr(Y \neq Z)$, if the data model is $(X,Y) \sim SP(m,p)$, then
the optimal privacy-utility tradeoff for the inference scenario in~\eqref{eqn:INF_opt_problem} is given by
\begin{align}\label{eqn:SP_INF_region}
\pi_\text{INF}(\delta) =
\begin{cases}
r_m(t), & \text{if }
\delta < 1-\frac{1}{m} \text{ and } p \notin(\delta,h), \\
\infty, & \text{if } 
\delta < 1-\frac{1}{m} \text{ and } p \in(\delta,h), \\
0, & \text{if } \delta \geq 1-\frac{1}{m},
\end{cases}
\end{align}
where $h := (m-1)(1-\delta)$ and
\[
t := \frac{\delta-p}{1-\frac{pm}{m-1}}.
\]
\end{theorem}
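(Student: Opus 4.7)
The plan is to exploit the coordinate symmetry of the $SP(m,p)$ distribution to collapse the functional optimization in \eqref{eqn:INF_opt_problem} down to a one-parameter scalar problem in $t := \Pr(X \neq Z)$, and then carry out a case analysis on $t$ and $p$.

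Step 1 (Symmetrization). For any permutation $\sigma$ of $\mathcal{X}$, the law $P_{X,Y}$ is invariant under $(X,Y)\mapsto(\sigma(X),\sigma(Y))$, so for any feasible $P_{Z|X}$ the relabeled mechanism $P^\sigma_{Z|X}(z|x) := P_{Z|X}(\sigma^{-1}(z)\mid\sigma^{-1}(x))$ achieves the same $I(X;Z)$ and the same $\Pr(Y \neq Z)$. Averaging $\bar P_{Z|X} := \frac{1}{m!}\sum_\sigma P^\sigma_{Z|X}$ still defines a valid inference-scenario mechanism, leaves the distortion unchanged (linearity), and does not increase $I(X;Z)$ (convexity of $I$ in $P_{Z|X}$ for fixed $P_X$). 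By construction $\bar P_{Z|X}(z|x)$ depends on $(z,x)$ only through $\mathbf{1}[z=x]$, so without loss of optimality the mechanism has the form $Z = X + N \bmod m$ with $N \sim SP(m,t)$ independent of $X$.

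Step 2 (Scalar $I$ and distortion). Since $X$ is uniform, so is $Z$, and Lemma~\ref{lemma:sp_mut_computation} gives $I(X;Z) = r_m(t)$. Writing $Y = X + N_0 \bmod m$ with $N_0 \sim SP(m,p)$ independent of $X$, the Markov chain $Y \to X \to Z$ makes $N$ independent of $(X,N_0)$, so
\begin{equation*}
\Pr(Y = Z) = \Pr(N_0 = N) = (1-p)(1-t) + \tfrac{pt}{m-1},
\end{equation*}
i.e., $\Pr(Y\neq Z) = p + t\,c$ with $c := 1 - \tfrac{pm}{m-1}$. Solving $p + tc = \delta$ recovers exactly the claimed $t = (\delta - p)/c$.

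Step 3 (Feasibility and minimization). The function $r_m$ is strictly decreasing on $[0,1-\tfrac{1}{m}]$ and strictly increasing on $[1-\tfrac{1}{m},1]$, with minimum $0$ at $t = 1-\tfrac{1}{m}$. I then split on the sign of $c$. If $p < 1-\tfrac{1}{m}$ ($c > 0$), then $\Pr(Y\neq Z)$ increases in $t \in [0,1-\tfrac{1}{m}]$ from $p$ to $1-\tfrac{1}{m}$; feasibility is $p \leq \delta$, and the optimum is at the largest feasible $t^* = (\delta-p)/c$, yielding $r_m(t^*)$. If $p > 1-\tfrac{1}{m}$ ($c < 0$), then $\Pr(Y\neq Z)$ decreases in $t \in [0,1]$ from $p$ to $1 - \tfrac{p}{m-1}$; feasibility is $\delta \geq 1 - \tfrac{p}{m-1}$, equivalently $p \geq h$. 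A direct calculation using $\delta < 1-\tfrac{1}{m}$ shows $t^* \in [1-\tfrac{1}{m},1]$, so $r_m$ is increasing on $[t^*,1]$ and the minimum is again $r_m(t^*)$. The boundary $p = 1-\tfrac{1}{m}$ makes $Y$ uniform and independent of $X$, forcing $\Pr(Y\neq Z) = 1-\tfrac{1}{m}$, hence infeasibility for $\delta < 1-\tfrac{1}{m}$. Since $h \geq 1-\tfrac{1}{m}$ whenever $\delta \leq 1-\tfrac{1}{m}$, merging the two regimes produces exactly the infeasible set $p \in (\delta,h)$, matching the second branch of~\eqref{eqn:SP_INF_region}.

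I expect the main obstacle to be Step 1: one must verify that the averaged channel remains in the inference-scenario feasible set, that the convexity of $I(X;Z)$ in $P_{Z|X}$ (for fixed $P_X$) really drives the objective down under averaging, and that the resulting $\bar P_{Z|X}$ is forced into the one-parameter symmetric-noise form. Once this reduction is established, Steps 2 and 3 are single-variable calculus, with the only subtlety being careful sign-tracking in $c$ and $t^*$ needed to stitch the regimes $p \lessgtr 1-\tfrac{1}{m}$ into the single clean feasibility condition $p \notin (\delta, h)$.
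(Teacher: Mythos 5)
Your proof is correct, and it reaches the same scalar reduction as the paper by a genuinely different route. The paper passes through the constraint-translation identity $\Pr(Y\neq Z)=p+\Pr(X\neq Z)\bigl(1-\tfrac{pm}{m-1}\bigr)$ (their Lemma~\ref{lemma:error_relation_markov}, which is the same content as your Step~2) and then invokes two Fano-type lemmas (Lemmas~\ref{lemma_fano1} and~\ref{fano_lower_bound2}) that solve the scalar problem $\inf I(X;Z)$ subject to a one-sided constraint on $\Pr(X\neq Z)$: the lower bound comes from Fano's inequality $I(X;Z)\geq r_m(\Pr(X\neq Z))$ and achievability from exhibiting the $SP(m,t)$ test channel. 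You instead derive the same conclusion that the optimizer is WLOG of $SP(m,t)$ form by a symmetrization argument --- averaging the channel over permutations of $\{0,\dots,m-1\}$, using permutation-invariance of $SP(m,p)$ to preserve distortion and convexity of $I(X;Z)$ in $P_{Z|X}$ (for fixed $P_X$) to not increase the objective, which forces the averaged channel into the symmetric-noise form. Both arguments are sound; the paper's Fano route is more modular (those two lemmas are reused verbatim across Theorems~\ref{thm:xy_available}--\ref{thm:x_available}), while your symmetrization is arguably more self-contained and makes the structure of the optimizer transparent rather than pulling it out of Fano. Your Step~3 case analysis, including the sign-tracking in $c$, the check that $t^\ast\in[1-\tfrac1m,1]$ when $p>1-\tfrac1m$, the degenerate boundary $p=1-\tfrac1m$, and the consolidation of the two infeasibility conditions into $p\in(\delta,h)$, matches the paper's.
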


\begin{remark}\textbf{(Tradeoff Plots)} In Figure~\ref{fig:trd1}, we plot the optimal privacy-utility tradeoff curves under the full data, output perturbation, and inference scenarios, for the symmetric pair data model with alphabet size $m = 10$ and cross-over parameter $p = 0.4$.

\begin{figure}[t]
   \centering
   \includegraphics[width=0.50\textwidth]{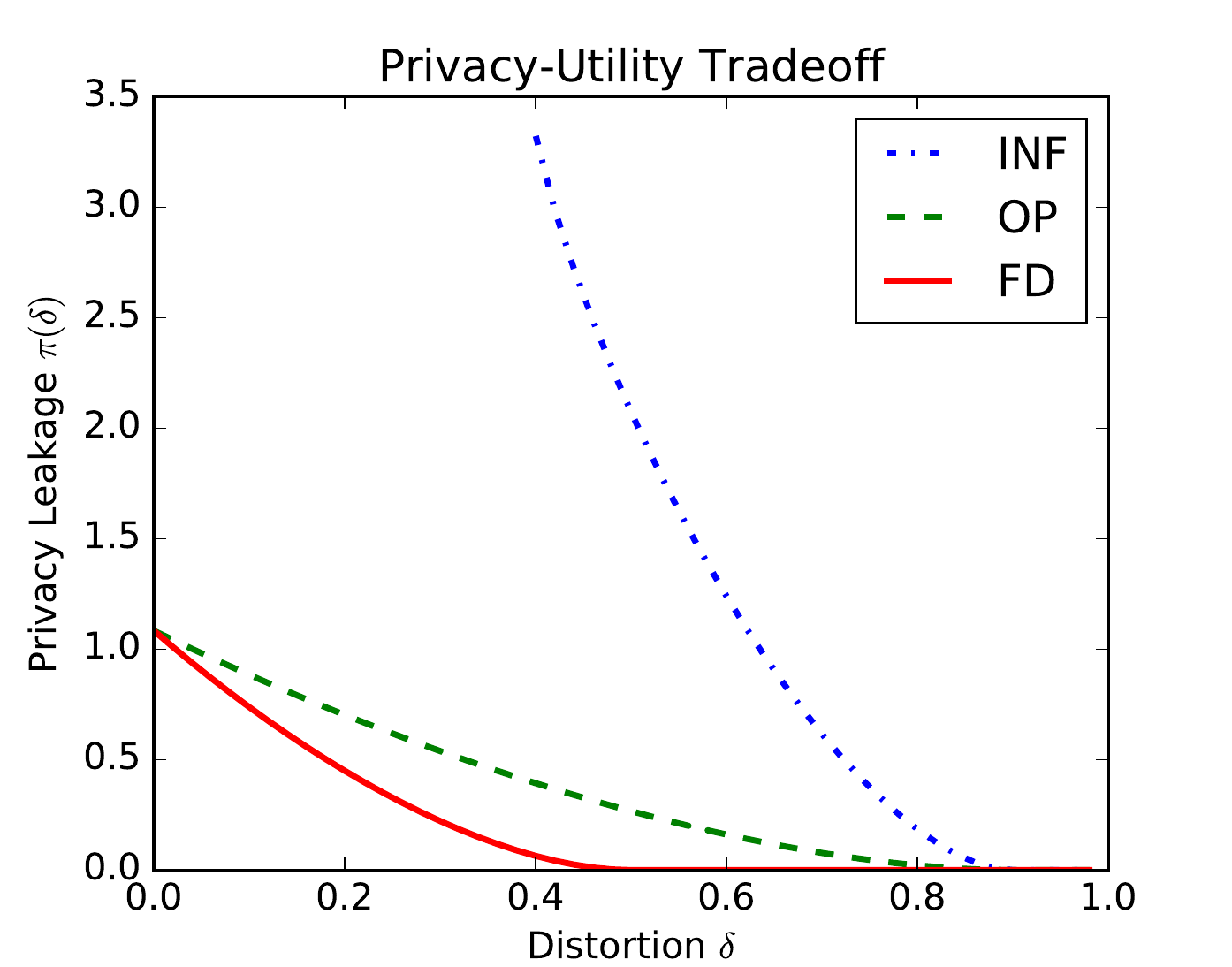}
   \caption{Optimal privacy-utility tradeoff curves under the inference (INF), output perturbation (OP), and full data (FD) scenarios, for the symmetric pair data model with alphabet size $m = 10$ and cross-over probability $p = 0.4$.}
   \label{fig:trd1}
 \end{figure}
\end{remark}

\section{Conclusion}
In this paper, we formulated the privacy-utility tradeoff problem where the data release mechanism has limited access to the entire data composed of useful and sensitive parts.
Based on this information theoretic formulation, we compared the privacy-utility tradeoff regions attained by full data, output perturbation, and inference mechanisms, which have access to the entire data, only useful data, and only sensitive data, respectively.

We first observed that the full data mechanism provides the best privacy-utility tradeoff and then showed that the output perturbation mechanism provides a better privacy-utility tradeoff than the inference mechanism.
We showed that if the common and mutual information between useful and sensitive data are identical, then the full data mechanism simplifies to the output perturbation mechanism.
Conversely, we showed that if the common information is not equal to mutual information, then the tradeoff region achieved by full data mechanism is strictly larger than the one achieved by the output perturbation mechanism.

Throughout the paper, we allowed for a general distortion measure, and
a general privacy measure that satisfies certain conditions that any
reasonable measure of privacy should satisfy. In particular, the measure
does not have to be symmetric and need not satisfy both the
inequalities that are usually implied by the data processing
inequality for a symmetric measure. In this context, the linkage
inequality was identified as the key property that is required for our
main results to hold. It was shown that the Sibson mutual information
of order infinity and the information privacy measures satisfy both the
post-processing and linkage inequalities, but the maximal information
leakage and differential privacy measures can violate the linkage
inequality. The philosophical implications of this for differential
privacy were also highlighted through a carefully
constructed analytical example.

\bibliographystyle{IEEEtran}
\bibliography{ref}

\appendices

\section{Proofs of Section~\ref{sec:privacy} Results}
\label{app:metricProofs}

\subsection{Proof of Proposition~\ref{prop:MaxIPineq}}
\label{app:MaxIPineq}

For $X \to Z_1 \to Z_2$ that form a Markov chain, we have that
\begin{align*}
\min_{z_1} H(X | Z_1 = z_1) &= \min_{z_1, z_2} H(X | Z_1 = z_1, Z_2 = z_2) \\
&\leq \min_{z_2} H(X | Z_1, Z_2 = z_2) \\
&\leq \min_{z_2} H(X | Z_2 = z_2),
\end{align*}
and thus, $I^*(X; Z_1) \geq I^*(X; Z_2)$, establishing the
post-processing inequality.

Violation of the linkage inequality is shown by considering the
counter-example where $X_2$ is ternary with $P_{X_2}(0) = 1/2$ and
$P_{X_2}(1) = P_{X_2}(2) = 1/4$, $X_1$ is binary with $X_1 = 0$ if and
only if $X_2 = 0$, and the release $Z = X_1$.  For this example, $X_2
\to X_1 \to Z$ is a Markov chain, $I^*(X_1; Z) = H(X_1) = 1$, and
$I^*(X_2; Z) = H(X_2) = 1.5$, since $H(X_1 | Z = 0) = H(X_2 | Z = 0) =
0$.  Hence, $I^*(X_2; Z) > I^*(X_1; Z)$ and the linkage inequality
does not hold.

\subsection{Proof of Lemma~\ref{lem:EpsInfoPrivIneq}}\label{app:EpsInfoPrivIneq}

Due to symmetry, it suffices to show only the post-processing
inequality.  For $X \to Z_1 \to Z_2$ that form a Markov chain, we have
that
\begin{align*}
IP (X;Z_2) &= \max_{x,z_2} \left | \ln \frac{P_{X|Z_2}(x|z_2)}{P_X(x)} \right | \\
&= \max_{x,z_2} \left | \ln \sum_{z_1} \frac{P_{X|Z_1}(x|z_1)
  P_{Z_1|Z_2}(z_1|z_2) }{P_X(x)} \right | \\
&= \max_{x,z_2} \left | \ln E_{Z_1} \left[
  \frac{P_{X|Z_1}(x|Z_1)}{P_X(x)} \bigg| Z_2 = z_2 \right] \right | \\
&\leq \max_{x,z_1} \left | \ln \frac{P_{X|Z_1}(x|z_1)}{P_X(x)} \right
| = IP (X;Z_1),
\end{align*}
where each maximization is over the supports of the respective
marginal distributions, and the inequality follows since the
absolute-log of the expectation is bounded by the maximum of the
absolute-log over the support.

\subsection{Proof of Corollary~\ref{cor:InfoPrivEquiv}}
\label{app:InfoPrivEquiv}

From~\eqref{eqn:epsInfoPriv} it follows that
\begin{equation*}
IP(X;Z) \leq \max_{\substack{\mathcal{A}\subseteq \mathcal{X},
    \mathcal{B} \subseteq \mathcal{Z}:\\ \Pr(X \in \mathcal{A}), \Pr(Z \in
    \mathcal{B}) > 0}} \left | \ln \frac{\Pr(X\in \mathcal{A}, Z\in
  \mathcal{B})}{\Pr(X\in \mathcal{A}) \Pr(Z \in \mathcal{B})} \right |.
\end{equation*}
To demonstrate the reverse inequality, we first observe that
\begin{align*}
& \max_{\substack{\mathcal{A}\subseteq \mathcal{X}, \mathcal{B}
      \subseteq \mathcal{Z}:\\ \Pr(X \in \mathcal{A}), \Pr(Z \in
      \mathcal{B}) > 0}} \left | \ln \frac{\Pr(X\in \mathcal{A}, Z\in
    \mathcal{B})}{\Pr(X\in \mathcal{A}) \Pr(Z \in \mathcal{B})} \right |
  = \\
& \max_{\substack{\mathcal{A}\subseteq \mathcal{X}, \mathcal{B}
      \subseteq \mathcal{Z}:\\ \Pr(X \in \mathcal{A}), \Pr(Z \in
      \mathcal{B}) > 0}} IP(1(X\in \mathcal{A});1(Z \in \mathcal{B})).
\end{align*}
Next note that $ 1(X \in \mathcal{A}) \to X \to Z \to 1(Z \in
\mathcal{B})$ forms a Markov chain for any choice of
$\mathcal{A}\subseteq \mathcal{X}, \mathcal{B} \subseteq \mathcal{Z}$
such that $\Pr(X \in \mathcal{A}), \Pr(Z \in \mathcal{B}) > 0$. From
Lemma~\ref{lem:EpsInfoPrivIneq} it follows that $IP(1(X \in
\mathcal{A});1(Z\in \mathcal{B}))$ cannot be larger than $IP(X;Z)$
(post-processing and linkage inequalities) for any valid choice of
$\mathcal{A,B}$. Thus,
\begin{align*}
\max_{\substack{\mathcal{A}\subseteq \mathcal{X}, \mathcal{B}
    \subseteq \mathcal{Z}:\\ \Pr(X \in \mathcal{A}), \Pr(Z \in
    \mathcal{B}) > 0}} IP(1(X\in \mathcal{A});1(Z \in \mathcal{B}))
\leq IP(X;Z)
\end{align*}
and the result follows.

\subsection{Proof of Lemma~\ref{lem:DPmetric}}\label{app:DPmetric}

From the definition its follows that a release mechanism is $\epsilon$-differentially private
if, and only if, for all $x_1,x_2
\in \mathcal{X}$ with $d_H(x_1,x_2)=1$, and all $\mathcal{B} \subseteq
\mathcal{Z}$,
\[
\left | \ln \frac{\Pr(Z\in \mathcal{B}|x_1)}{\Pr(Z\in \mathcal{B}|x_2)}
\right | \leq \epsilon
\]
Thus if a release mechanism is $\epsilon$-differentially private, then
\begin{equation} \label{eqn:epsDiffPriv}
DP(X;Z) := \max_{\substack{x_1,x_2\in \mathcal{X}, \mathcal{B} \subseteq \mathcal{Z}:\\ d_H(x_1,x_2)=1}}
\left | \ln \frac{\Pr(Z\in \mathcal{B}|x_1)}{\Pr(Z\in \mathcal{B}|x_2)}
\right | \leq \epsilon.
\end{equation}
Since
\[
\frac{\Pr(Z\in \mathcal{B}|X=x_1)}{\Pr(Z\in \mathcal{B}|X=x_2)} =
\frac{\sum_{z\in \mathcal{B}} P_{Z|X}(z|x_1)}{\sum_{z\in \mathcal{B}}
  P_{Z|X}(z|x_2)} \leq \max_{z\in \mathcal{B}}
\frac{P_{Z|X}(z|x_1)}{P_{Z|X}(z|x_2)},
\]
it follows that reducing the scope of maximization in
\eqref{eqn:epsDiffPriv} from subsets $\mathcal{B} \subseteq
\mathcal{Z}$ to singletons $z \in \mathcal{Z}$ will not decrease the
maximum value, i.e.,

\subsection{Proof of Proposition~\ref{prop:DPexample}}
\label{app:DPexample}

\begin{figure}[!t]
   \centering
   \includegraphics[width=0.48\textwidth]{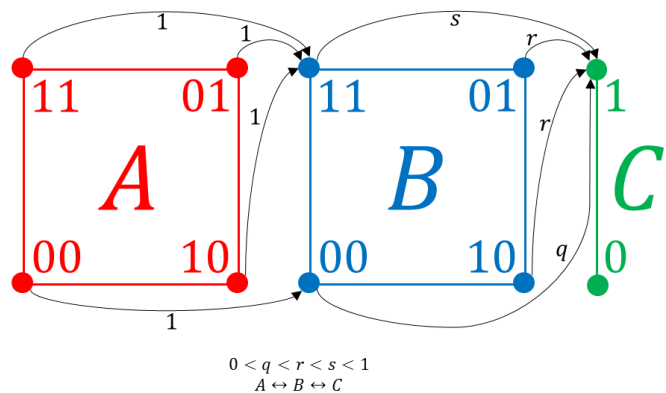}
\vglue -10ex
   \caption{An example which demonstrates that DP can violate the
     linkage inequality. Here, $A$ and $B$ are databases taking values
     in $\{0,1\}^2$ and $C\in \{0,1\}$ is the data release. The
     colored edges indicate databases that differ in exactly one
     element. By construction, $A\to B \to C$ forms a Markov chain and
     yet $DP(A;C) > DP(B;C)$.}
   \label{fig:DPviolatesDPI}
\end{figure}

We will construct $A \to B \to C$ such that $DP(A;C) >
DP(B;C)$.  Let databases $A, B \in \{0,1\}^2$ and the release $C \in
\{0,1\}$. The database $B := (B_1,B_2)$ is a deterministic function
of the database $A := (A_1,A_2)$. Specifically, $B_1 = B_2 = A_1 \vee
A_2$. The release $C$ is produced by the mechanism $P_{C|B}$, given by
\[
P_{C|B}(1 | b) =\begin{cases}
q, & \text{if }b=(0,0),\\
s, & \text{if }b=(1,1),\\
r, & \text{otherwise},
\end{cases}
\]
with $0 < q < r < s < 1$.
The construction of $(A,B,C)$ is summarized in
Fig.~\ref{fig:DPviolatesDPI} where the solid circles indicate databases and
the colored edges join databases that are at Hamming distance one
from each other.
Since $0 < q < r < s < 1$, 
\[
1 < \max\left(\frac{s}{r},\frac{r}{q}\right) < \frac{s}{q}
\]
If we define $\bar{t}:= (1-t)$ for convenience, then $0 < \bar{s} <
\bar{r} < \bar{q} < 1$ so that
\[
1 < \max\left(\frac{\bar{r}}{\bar{s}},\frac{\bar{q}}{\bar{r}}\right) <
\frac{\bar{q}}{\bar{s}}.
\]
Thus,
\begin{align*}
0 = \ln 1 < DP(B;C) &=
\max\left(\ln\frac{s}{r},\ln\frac{r}{q},\ln\frac{\bar{q}}{\bar{r}},\ln\frac{\bar{r}}{\bar{s}}\right)
\\ &< \max\left(\ln\frac{s}{q},\ln\frac{\bar{q}}{\bar{s}}\right)
= DP(A;C).
\end{align*}

\section{Properties of Common Information} \label{app:CommonInfoProps}

The {\em graphical representation} of $P_{X,Y}$ is the bipartite graph
with an edge between $x \in \mathcal{X}$ and $y \in \mathcal{Y}$ if
and only if $P_{X,Y}(x,y) > 0$.  The {\em common part} $U$ of two random
variables $(X,Y)$ is defined as the (unique) label of the
connected component of the graphical representation of $P_{X,Y}$ in
which $(X,Y)$ falls. Note that $U$ is a deterministic function
of $X$ alone and also a deterministic function of $Y$ alone.

The G\'acs-K\"orner common information of two random variables $(X,Y)$
is given by entropy of the common part, that is, $C(X;Y) := H(U)$, and has
the operational significance of being the maximum number
of common bits per symbol that can be independently extracted from $X$
and $Y$~\cite{GacsKorn-73-CommonInfo}.
In general, $C(X;Y) \leq I(X;Y)$, with equality if and only if $X \to U \to Y$ forms a Markov chain~\cite{AhlsKorn-74-CommonInfo}.
Since our results are only concerned with whether $C(X;Y) = I(X;Y)$, our theorem statements are unchanged if we use instead the Wyner notion of common information (see~\cite{Wyner-75-CommonInfo}), since it is also equal to mutual information if and only if $X \to U \to Y$ forms a Markov chain~\cite{AhlsKorn-74-CommonInfo}.

We give the following lemma which aids our proof of Theorem~\ref{theorem:OPConverse} in Appendix~\ref{app:OPConverseProof}.

\begin{lemma} \label{lemma:CommonPartProperty}
If $C(X;Y) \neq I(X;Y)$, then there exist $x_0, x_1 \in \mathcal{X}$ and $y_0, y_1 \in \mathcal{Y}$, such that $y_0 \neq y_1$, $P_{X,Y}(x_0,y_0) > 0$, $P_{X,Y}(x_0,y_1) > 0$, and $P_{X|Y}(x_1|y_0) \neq P_{X|Y}(x_1|y_1)$.
\end{lemma}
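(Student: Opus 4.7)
\textbf{Proof proposal for Lemma~\ref{lemma:CommonPartProperty}.}
The plan is to leverage the characterization, recalled in Appendix~\ref{app:CommonInfoProps}, that $C(X;Y) = I(X;Y)$ if and only if $X \to U \to Y$ forms a Markov chain, where $U$ is the common part of $(X,Y)$. So the hypothesis $C(X;Y) \neq I(X;Y)$ says precisely that $X$ and $Y$ are not conditionally independent given $U$. Since $U$ is a deterministic function of $Y$ alone (it is the label of the connected component of the graphical representation of $P_{X,Y}$ into which $(X,Y)$ falls), the failure of this Markov property translates directly into the existence of two values $y_0^*, y_1^* \in \mathcal{Y}$ that lie in the same connected component of the bipartite graphical representation (i.e.\ assign the same value of $U$) and yet satisfy $P_{X|Y}(\cdot \mid y_0^*) \neq P_{X|Y}(\cdot \mid y_1^*)$.

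From here, the remaining step is to upgrade ``same connected component'' to the stronger conclusion ``sharing a common neighbor $x_0$ with positive joint probability.'' The main obstacle is that $y_0^*$ and $y_1^*$ need not themselves share a common neighbor $x_0 \in \mathcal{X}$; they may only be joined by a longer alternating path in the bipartite graph. To handle this, I would invoke the definition of a connected component to obtain a simple alternating path $y_0^* = y^{(0)}, x^{(1)}, y^{(1)}, x^{(2)}, \ldots, x^{(k)}, y^{(k)} = y_1^*$ with every edge $\{x^{(i+1)}, y^{(i)}\}$ and $\{x^{(i+1)}, y^{(i+1)}\}$ carrying positive joint probability. Each consecutive pair $y^{(i)}, y^{(i+1)}$ therefore shares the common neighbor $x^{(i+1)}$.

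I would then apply a telescoping/pigeonhole argument along this path: because the endpoint conditionals $P_{X|Y}(\cdot \mid y^{(0)})$ and $P_{X|Y}(\cdot \mid y^{(k)})$ differ, the sequence $\bigl(P_{X|Y}(\cdot \mid y^{(i)})\bigr)_{i=0}^{k}$ cannot be constant, so there must be some index $i^*$ with $P_{X|Y}(\cdot \mid y^{(i^*)}) \neq P_{X|Y}(\cdot \mid y^{(i^*+1)})$. Setting $y_0 := y^{(i^*)}$, $y_1 := y^{(i^*+1)}$, and $x_0 := x^{(i^*+1)}$ gives $P_{X,Y}(x_0, y_0) > 0$ and $P_{X,Y}(x_0, y_1) > 0$ by construction, with $y_0 \neq y_1$ because consecutive $y$-vertices of a simple alternating path are distinct. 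Finally, the inequality of the two conditional distributions on the full alphabet guarantees some $x_1 \in \mathcal{X}$ with $P_{X|Y}(x_1 \mid y_0) \neq P_{X|Y}(x_1 \mid y_1)$, which completes the proof. The only subtlety worth stating carefully is that one must extract a \emph{simple} path (not merely a walk) from the connected-component assumption to ensure $y_0 \neq y_1$; this is a standard graph-theoretic fact and should not require more than a one-line remark.
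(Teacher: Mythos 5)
Your proof is correct, and it rests on exactly the same structural fact as the paper's: $C(X;Y)=I(X;Y)$ if and only if $X\to U\to Y$ forms a Markov chain, where $U$ is the connected-component label. The paper argues by contrapositive (assume no such $(x_0,x_1,y_0,y_1)$ exists, split into the case where no $x_0$ has two $y$-neighbors, so $Y$ is a deterministic function of $X$, and the case where it does, then assert $X\to U\to Y$), while you run the argument in the forward direction: non-Markovity hands you two $y$'s in the same component with different conditionals, and a walk/telescoping step localizes the difference to a pair sharing a common neighbor $x_0$. These are logically the same argument stated from opposite ends; your version has the merit of making explicit the path-transitivity step that the paper compresses into a one-line assertion, and it also silently absorbs the paper's first case (when $Y$ is a function of $X$, every component contains a single $y$, so no violating pair can exist). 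One small simplification you could make: the requirement that the path be \emph{simple} is unnecessary, since whenever $P_{X|Y}(\cdot\mid y^{(i^*)}) \neq P_{X|Y}(\cdot\mid y^{(i^*+1)})$ it follows automatically that $y^{(i^*)}\neq y^{(i^*+1)}$; any walk through the component suffices.
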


\begin{proof}
We will prove this lemma by showing the contrapositive, that is, if there does not exist $x_0, x_1 \in \mathcal{X}$ and $y_0, y_1 \in \mathcal{Y}$ satisfying the conditions stated in the lemma, then $C(X;Y) = I(X;Y)$.
First, note that if for all $x_0 \in \mathcal{X}$ and $y_0, y_1 \in \mathcal{Y}$, either $y_0 = y_1$, $P_{X,Y}(x_0,y_0) = 0$, or $P_{X,Y}(x_0,y_1) = 0$, then $Y$ is a deterministic function of $X$, which would result in $C(X;Y) = I(X;Y)$.
Thus, we are left with showing that for all $x_0 \in \mathcal{X}$ and $y_0, y_1 \in \mathcal{Y}$, with $y_0 \neq y_1$, $P_{X,Y}(x_0,y_0) > 0$, and $P_{X,Y}(x_0,y_1) > 0$, if we also have that for all $x_1 \in \mathcal{X}$, $P_{X|Y}(x_1|y_0) = P_{X|Y}(x_1|y_1)$, then $C(X;Y) = I(X;Y)$.
This follows since these conditions would imply that for the common part $U$ of $(X,Y)$, $X \to U \to Y$ forms a Markov chain.
\end{proof}

\section{Proof of Theorem~\ref{theorem:order}} \label{app:OrderProof}

It is sufficient to show that for any mechanism $P_{Z|X}$ that is a feasible solution in the inference optimization of~\eqref{eqn:INF_opt_problem}, there is a corresponding mechanism $P_{Z'|Y}$ for the output perturbation optimization of~\eqref{eqn:OP_opt_problem} that achieves the same distortion and only lesser or equal privacy-leakage.

Let $P_{Z|X}$ be a mechanism in the feasible region of the inference optimization problem of~\eqref{eqn:INF_opt_problem}.
Define the corresponding mechanism for the output perturbation optimization of~\eqref{eqn:OP_opt_problem} by
\[
P_{Z'|Y}(z|y) := \sum_{x \in \mathcal{X}} P_{Z|X}(z|x) P_{X|Y}(x|y).
\]
Let $(X, Y, Z, Z') \sim P_{X,Y} P_{Z|X} P_{Z'|Y}$.
Note that by construction, $(Y,Z)$ and $(Y,Z')$ have the same distribution $P_{Y} P_{Z'|Y}$.
Thus, both mechanisms achieve the same distortion $D(P_{Y} P_{Z'|Y})$ and $J(Y;Z) = J(Y;Z')$.
Further, by construction, $Y \to X \to Z$ and $X \to Y \to Z'$ form Markov chains.
Thus, by the linkage inequality,
\[
J(X;Z') \leq J(Y;Z') = J(Y;Z) \leq J(X;Z),
\]
showing that the output perturbation mechanism has only lesser or equal privacy-leakage.

\section{Proof of Theorem~\ref{theorem:OPoptimality}} \label{app:OPoptimalityProof}

Since $\pi_{\text{FD}}(\delta) \leq \pi_{\text{OP}}(\delta)$ is immediate, we only need to show that
$\pi_{\text{OP}}(\delta) \leq \pi_{\text{FD}}(\delta)$.
It is sufficient to show that for any mechanism $P_{Z|X,Y}$ that is a feasible solution in the full data optimization of~\eqref{eqn:FD_opt_problem}, there is a corresponding mechanism $P_{Z'|Y}$ for the output perturbation optimization of~\eqref{eqn:OP_opt_problem} that achieves the same distortion and only lesser or equal privacy-leakage.

Let $P_{Z|X,Y}$ be a mechanism in the feasible region of the full data optimization problem of~\eqref{eqn:FD_opt_problem}.
Define the corresponding mechanism for the output perturbation optimization of~\eqref{eqn:OP_opt_problem} by
\[
P_{Z'|Y}(z|y) := \sum_{x \in \mathcal{X}} P_{Z|X,Y}(z|x,y) P_{X|Y}(x|y).
\]
Let $(X, Y, Z, Z') \sim P_{X,Y} P_{Z|X,Y} P_{Z'|Y}$, and let $U$ be the common part of $(X,Y)$, where, by construction, $U$ is a deterministic function of either $X$ alone or $Y$ alone.
Since $C(X;Y) = I(X;Y)$, we have that $X \to U \to Y$ forms
a Markov chain, i.e., $I(X;Y|U) = 0$.
By construction, $X \to Y \to Z'$ also forms a Markov chain, and hence $I(X;Z'|U,Y) = I(X;Z'|Y) = 0$, since $U$ is deterministic function of $Y$.
Given these two Markov chains, we have
\begin{align*}
0 &= I(X; Y|U) + I(X; Z'|U,Y) \\
  &= I(X; Y,Z'|U) \\
  &= I(X; Z'|U) + I(X; Y|U,Z') \\
  &\geq I(X; Z'|U),
\end{align*}
and hence $I(X; Z'|U) = 0$, i.e., $X \to U \to Z'$ also forms a Markov chain.
Continuing, we can show the desired privacy-leakage inequality as follows
\begin{align*}
J(X;Z') &\leq J(X,U;Z') \\
&\leq J(U;Z') \\
&= J(U;Z) \\
&\leq J(X,U;Z) \\
&\leq J(X;Z),
\end{align*}
where the equality holds since by construction $P_{Y,Z} = P_{Y,Z'}$
(and hence $P_{U,Z} = P_{U,Z'}$), and the four inequalities follow, respectively, by applying the linkage inequality to the following Markov chains:
\begin{itemize}
\item $X \to (X,U) \to Z'$, since $X$ is a function of $(X,U)$.
\item $(X,U) \to U \to Z'$, since $U$ is a function of $X$, and since $X \to U \to Z'$ forms a Markov chain as shown above.
\item $U \to (X,U) \to Z$, since $U$ is a function of $(X,U)$.
\item $(X,U) \to X \to Z$, since $(X,U)$ is a function of $X$.
\end{itemize}

\section{Proof of Theorem~\ref{theorem:OPConverse}} \label{app:OPConverseProof}

We will show the following result, which is key to the proof.

\begin{lemma} \label{lemma:NecessaryCond}
If $C(X;Y) \neq I(X;Y)$ then there exist random variables $Z$ and $Z'$ with $P_{Y,Z} = P_{Y,Z'}$, such that $X \to Y \to Z'$ forms a Markov chain,
$I(X;Z) = 0$, and $I(X;Z') > 0$.
\end{lemma}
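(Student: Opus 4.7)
The plan is to give an explicit construction of $Z$ and $Z'$ starting from the quadruple $x_0,x_1,y_0,y_1$ supplied by Lemma~\ref{lemma:CommonPartProperty}. I let $Z$ be binary and take the full-data mechanism
\[
P_{Z|X,Y}(1\mid x,y)=\tfrac12+\epsilon_{xy},
\]
where $\epsilon_{xy}=0$ except at the two cells $(x_0,y_0)$ and $(x_0,y_1)$, both of which Lemma~\ref{lemma:CommonPartProperty} guarantees to have positive probability. Setting $\epsilon_{x_0,y_0}=\alpha$ and $\epsilon_{x_0,y_1}=-\alpha\,P_{Y|X}(y_0\mid x_0)/P_{Y|X}(y_1\mid x_0)$ with $\alpha>0$ small enough to keep all entries in $[0,1]$, the two perturbations cancel when averaged against $P_{Y|X=x_0}$ and are absent for $x\neq x_0$, so that $\sum_y P_{Z|X,Y}(1\mid x,y)P_{Y|X}(y\mid x)=1/2$ for every $x$ in the support. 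Hence $Z\perp X$ and $I(X;Z)=0$.

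I then define $Z'$ via $P_{Z'|Y}:=P_{Z|Y}$, which makes $X\to Y\to Z'$ a Markov chain and enforces $P_{Y,Z}=P_{Y,Z'}$ by construction. A short Bayes computation gives $P_{Z|Y}(1\mid y_0)=\tfrac12+\alpha\,P_{X,Y}(x_0,y_0)/P_Y(y_0)$, $P_{Z|Y}(1\mid y_1)=\tfrac12-\alpha\,P_{X,Y}(x_0,y_0)/P_Y(y_1)$, and $P_{Z|Y}(1\mid y)=1/2$ for every other $y$ in the support. The two non-trivial displacements are nonzero and opposite in sign.

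To establish $I(X;Z')>0$, I expand $P_{Z'|X}(1\mid x)=\sum_y P_{Z|Y}(1\mid y)P_{Y|X}(y\mid x)$ and apply the identity $P_{Y|X}(y\mid x)/P_Y(y)=P_{X|Y}(x\mid y)/P_X(x)$. After cancellations the expression collapses to
\[
P_{Z'|X}(1\mid x)-\tfrac12=\frac{\alpha\,P_{X,Y}(x_0,y_0)}{P_X(x)}\bigl[P_{X|Y}(x\mid y_0)-P_{X|Y}(x\mid y_1)\bigr].
\]
Setting $x=x_1$ and invoking the final conclusion of Lemma~\ref{lemma:CommonPartProperty}, $P_{X|Y}(x_1\mid y_0)\neq P_{X|Y}(x_1\mid y_1)$ (which in particular forces $P_X(x_1)>0$), shows $P_{Z'|X}(1\mid x_1)\neq 1/2=P_{Z'}(1)$, so $X$ and $Z'$ are not independent and $I(X;Z')>0$.

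Everything above is explicit and the argument is largely bookkeeping: shrinking $\alpha$ so that all entries of $P_{Z|X,Y}$ stay in $[0,1]$, and then executing the Bayes substitutions so that the output factor lines up exactly with the non-equality Lemma~\ref{lemma:CommonPartProperty} provides. The only conceptual step is the guiding intuition: securing $Z\perp X$ with full-data access requires using $X$ to cancel the $Y$-dependence leaked through the $(x_0,y_0),(x_0,y_1)$ ambiguity; once $X$ is unavailable in producing $Z'$ from $Y$ alone, that cancellation cannot be performed and the differing conditionals $P_{X|Y}(\cdot\mid y_0)\neq P_{X|Y}(\cdot\mid y_1)$ necessarily create dependence between $X$ and $Z'$.
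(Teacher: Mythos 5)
Your construction is essentially the paper's own proof: the paper also takes a binary $Z$, sets $P_{Z|X,Y}(0\mid x,y)=s$ everywhere except at $(x_0,y_0)$ and $(x_0,y_1)$, perturbs those two cells by $\pm t$ with coefficients $P_{Y|X}(y_1\mid x_0)$ and $P_{Y|X}(y_0\mid x_0)$ so the perturbations cancel when averaged over $P_{Y|X=x_0}$, and then shows $P_{Z'|X}(0\mid x)-s$ is proportional to $\bigl[P_{X|Y}(x\mid y_0)-P_{X|Y}(x\mid y_1)\bigr]/P_X(x)$; your version just specializes $s=1/2$ and reparametrizes $t$. The one small stylistic difference is at the end: the paper argues by contradiction that $P_{Z'|X}(0\mid\cdot)$ cannot be constant (if it were, $P_{X|Y}(x\mid y_0)-P_{X|Y}(x\mid y_1)=cP_X(x)$ would force $c=0$), whereas you evaluate directly at $x=x_1$ and compare against $P_{Z'}(1)=1/2$ using $P_{Y,Z}=P_{Y,Z'}$, which is a marginally more direct finish but not a different argument.
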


The proof of Theorem~\ref{theorem:OPConverse} then follows by defining the distortion measure $D(P_{Y,Z})$ to equal $1$
for the particular choice of $P_{Y,Z'}$ in
Lemma~\ref{lemma:NecessaryCond} and to equal $2$ otherwise, and choosing
$\delta = 1$.
This choice for the distortion measure and distortion level restricts the feasible output perturbation mechanism to only $P_{Z'|Y}$, which by Lemma~\ref{lemma:NecessaryCond} results in $\pi_{OP}(\delta) > 0$ since $J(X;Z') > 0$ (since $I(X;Z') > 0$).
However, the proof of Lemma~\ref{lemma:NecessaryCond} (see below) also ensures the existence of $Z$ produced by a full data mechanism $P_{Z|X,Y}$ that results in $\pi_{FD}(\delta) = 0$ since $I(X;Z) = 0$.

Using the symbols $(x_0,x_1,y_0,y_1)$ shown to exist by Lemma~\ref{lemma:CommonPartProperty}, we can prove Lemma~\ref{lemma:NecessaryCond} by constructing a binary $Z$ with alphabet $\mathcal{Z} = \{0,1\}$ as follows.
Choose any $s \in (0,1)$ and
any $t \in \left(0, \min\{ s'/P_{Y|X}(y_1|x_0) , s/P_{Y|X}(y_0|x_0) \} \right)$, where $s' :=
(1-s)$.
Define $Z$ with $(X,Y,Z) \sim P_{X,Y}P_{Z|X,Y}$, where
\begin{align*}
P_{Z|X,Y}(0|x,y) := \begin{cases}
s + t P_{Y|X}(y_1|x_0), & \text{if } (x,y) = (x_0,y_0), \\
s - t P_{Y|X}(y_0|x_0), & \text{if } (x,y) = (x_0,y_1), \\
s, & \text{otherwise}.
\end{cases}
\end{align*}
The choice of $s$ and $t$ ensures that $P_{Z|X,Y}(0|x,y) \in (0,1)$ for all
$(x,y) \in \mathcal{X} \times \mathcal{Y}$.
This construction of $P_{Z|X,Y}$ makes $Z$ independent of $X$, since for all $x \in \mathcal{X}$ in the support of $P_X$,
\[
P_{Z|X}(0|x) = \sum_{y \in \mathcal{Y}} P_{Z|X,Y}(0|x,y) P_{Y|X}(y|x) = s.
\]
With the above construction, we have
\begin{align*}
P_{Z|Y}(0|y) &= \sum_{x \in \mathcal{X}} P_{Z|X,Y}(0|x,y) P_{X|Y}(x|y) \\
&= \begin{cases}
s + t P_{Y|X}(y_1|x_0) P_{X|Y}(x_0|y_0), & \text{if } y = y_0, \\
s - t P_{Y|X}(y_0|x_0) P_{X|Y}(x_0|y_1), & \text{if } y = y_1, \\
s, & \text{otherwise}.
\end{cases}
\end{align*}

Next, we construct binary $Z'$ such that $X \to Y \to Z'$ forms a Markov chain, with $(X, Y, Z') \sim P_{X,Y} P_{Z'|Y}$, where we set $P_{Z'|Y} := P_{Z|Y}$.
Then, consider
\begin{align*}
P_{Z'|X}(0|x) &= \sum_{y \in \mathcal{Y}} P_{Z'|Y}(0|y) P_{Y|X}(y|x) \\
&= \sum_{y \in \mathcal{Y}} P_{Z|Y}(0|y) P_{Y|X}(y|x) \\
&= s + t P_{Y|X}(y_1|x_0) P_{X|Y}(x_0|y_0) P_{Y|X}(y_0|x) \\
& \quad \quad - t P_{Y|X}(y_0|x_0) P_{X|Y}(x_0|y_1) P_{Y|X}(y_1|x) \\
&= s + t P_X(x_0) P_{Y|X}(y_0|x_0) P_{Y|X}(y_1|x_0) \\
& \quad \quad \times [ P_{X|Y}(x|y_0) - P_{X|Y}(x|y_1)]/P_X(x).
\end{align*}
Finally, we show that $P_{Z'|X}(0|x)$ is not constant for all $x \in \mathcal{X}$ in the support of $P_X$,
which implies that $Z'$ is not independent of $X$, i.e., $I(X;Z') > 0$.
This can be proved by contradiction, by supposing that $P_{Z'|X}(0|x)$ is constant for all $x \in \mathcal{X}$ in the support of $P_X$.
Then, for all $x \in \mathcal{X}$,
\[
P_{X|Y}(x|y_0) - P_{X|Y}(x|y_1) = c P_X(x),
\]
for some constant $c$. By summing over all $x \in \mathcal{X}$, we have that $c = 0$.
This would imply that $P_{X|Y}(x|y_0) = P_{X|Y}(x|y_1)$ for all $x \in \mathcal{X}$, contradicting the existence of $x_1 \in \mathcal{X}$ given by
Lemma~\ref{lemma:CommonPartProperty} for the choice of $y_0$ and $y_1$.

\section{Some Useful Lemmas}
In this section, we provide a set of lemmas that we use to prove the results presented in Section~\ref{sec:example}.

\begin{lemma}\label{lemma:error_relation}
Let $X,Y$, and $Z$ be discrete random variables, with $X, Y \in \{0,\dots, m-1\}$.
If $(X,Y) \sim SP(m, p)$, then
\begin{align*}
&\Pr(Y \neq Z) - \Pr(X \neq Z) \\
&\quad =\frac{p}{m(m-1)} \mathop{\sum_{x,y}}_{x \neq y} \left[ P_{Z|X,Y}(x|x,y)-P_{Z|X,Y}(y|x,y) \right]
\end{align*}
\end{lemma}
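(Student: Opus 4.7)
The plan is to reduce both probabilities to their complementary events, express them via the law of total probability, and then exploit the particular structure of the symmetric pair distribution.

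First, I would rewrite $\Pr(Y \neq Z) - \Pr(X \neq Z) = \Pr(X = Z) - \Pr(Y = Z)$ by taking complements. Next, using the law of total probability conditioned on $(X,Y)$, I would expand
\[
\Pr(X = Z) = \sum_{x,y} P_{X,Y}(x,y)\, P_{Z|X,Y}(x|x,y),
\]
\[
\Pr(Y = Z) = \sum_{x,y} P_{X,Y}(x,y)\, P_{Z|X,Y}(y|x,y),
\]
and subtract them to obtain
\[
\Pr(X = Z) - \Pr(Y = Z) = \sum_{x,y} P_{X,Y}(x,y)\bigl[ P_{Z|X,Y}(x|x,y) - P_{Z|X,Y}(y|x,y) \bigr].
\]

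The second step is to notice that every diagonal term ($x = y$) vanishes identically because $P_{Z|X,Y}(x|x,y) - P_{Z|X,Y}(y|x,y) = 0$ when $x = y$, so the sum can be restricted to $x \neq y$ without changing its value. Then I would invoke the definition of the symmetric pair $SP(m,p)$, which gives $P_{X,Y}(x,y) = \frac{p}{m(m-1)}$ for every off-diagonal pair. Factoring out this constant yields the claimed identity.

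There is no real obstacle here—this lemma is essentially bookkeeping. The only thing to be careful about is keeping track of complements and the sign (since we compute $\Pr(X = Z) - \Pr(Y = Z)$, which equals $\Pr(Y \neq Z) - \Pr(X \neq Z)$, so the order of terms inside the bracket $P_{Z|X,Y}(x|x,y) - P_{Z|X,Y}(y|x,y)$ comes out correctly as stated). The key structural fact being used is the uniformity of $P_{X,Y}$ on off-diagonal pairs, which is what allows the probability mass to be pulled out of the sum as a single multiplicative constant $p/(m(m-1))$.
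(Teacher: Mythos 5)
Your proof is correct and follows essentially the same route as the paper: expanding via the law of total probability conditioned on $(X,Y)$, observing the diagonal terms cancel, and using the uniform off-diagonal mass $p/(m(m-1))$ of the symmetric pair. The only cosmetic difference is that you pass to complements up front, whereas the paper expands $1-\Pr(Y=Z)$ and $1-\Pr(X=Z)$ separately and subtracts, but the algebra is identical.
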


\begin{proof}
We can expand $\Pr(Y\neq Z)$ as 
\begin{align*}
\Pr(Y \neq Z) &= 1 - \Pr(Y=Z) \\
& = 1 - \sum_{x,y} P_{X,Y}(x,y)P_{Z|X,Y}(y|x,y) \\
& = 1 - \sum_x P_{X,Y}(x,x)P_{Z|X,Y}(x|x,x) \\
& \quad - \sum_{x \neq y}P_{X,Y}(x,y) P_{Z|X,Y}(y|x,y) \\
& = 1 - \frac{1-p}{m} \sum_x P_{Z|X,Y}(x|x,x) \\
& \quad - \frac{p}{m(m-1)} \sum_{x\neq y}P_{Z|X,Y}(y|x,y).
\end{align*}
Similarly, we have that
\begin{align*}
P(X \neq Z) &= 1 - \frac{1-p}{m} \sum_x P_{Z|X,Y}(x|x,x) \\
& \quad - \frac{p}{m(m-1)} \sum_{x\neq y} P_{Z|X,Y}(x|x,y).
\end{align*}
Subtracting these two expansions yields the lemma.
\end{proof}

\begin{lemma}\label{lemma:error_relation_markov}
Let $X,Y$, and $Z$ be discrete random variables, with $X, Y \in \{0,\dots, m-1\}$.
If $(X,Y) \sim SP(m, p)$ and $Y\to X\to Z$ forms a Markov chain, then
\begin{equation*}
\Pr(Y\neq Z) =  p+\Pr(X\neq Z) \left( 1 - \frac{pm}{m-1} \right).
\end{equation*}
\end{lemma}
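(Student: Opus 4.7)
The plan is to apply Lemma~\ref{lemma:error_relation} and then exploit the Markov chain $Y \to X \to Z$, which collapses $P_{Z|X,Y}(z|x,y)$ to $P_{Z|X}(z|x)$, so that the inner bracket in that lemma becomes $P_{Z|X}(x|x) - P_{Z|X}(y|x)$, depending on $x$ only through the conditional law of $Z$ given $X$.

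Next I would evaluate the double sum
\[
S := \sum_{x \neq y} \bigl[ P_{Z|X}(x|x) - P_{Z|X}(y|x) \bigr]
\]
by splitting it into two pieces. The first piece is $(m-1) \sum_x P_{Z|X}(x|x)$, since the summand of the first term is independent of $y$. The second piece is $\sum_x \sum_{y \neq x} P_{Z|X}(y|x) = \sum_x \bigl[1 - P_{Z|X}(x|x)\bigr] = m - \sum_x P_{Z|X}(x|x)$. Combining, $S = m \sum_x P_{Z|X}(x|x) - m$.

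The key observation that ties everything together is that under $SP(m,p)$ the marginal $P_X$ is uniform, so $\Pr(X = Z) = \frac{1}{m} \sum_x P_{Z|X}(x|x)$, giving $\sum_x P_{Z|X}(x|x) = m\bigl(1 - \Pr(X \neq Z)\bigr)$. Substituting into the expression for $S$, plugging into Lemma~\ref{lemma:error_relation}, and simplifying yields
\[
\Pr(Y \neq Z) - \Pr(X \neq Z) = \frac{p}{m(m-1)} \cdot m \bigl( m - 1 - m \Pr(X \neq Z) \bigr) = p - \frac{pm}{m-1} \Pr(X \neq Z),
\]
which rearranges to the claimed identity.

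The proof is essentially a routine calculation rather than a conceptual argument; no step presents a genuine obstacle. The only place requiring care is the bookkeeping in evaluating $S$ (making sure to properly account for the diagonal omitted in the $x \neq y$ constraint) and the invocation of uniformity of $P_X$, which is the one distributional feature of $SP(m,p)$ that the argument actually uses beyond what is already encoded in Lemma~\ref{lemma:error_relation}.
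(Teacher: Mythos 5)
Your proof is correct and follows essentially the same route as the paper: both invoke Lemma~\ref{lemma:error_relation}, use the Markov chain to replace $P_{Z|X,Y}$ by $P_{Z|X}$, and then evaluate the resulting double sum via uniformity of $X$ to express it through $\Pr(X=Z)$ and $\Pr(X\neq Z)$. The only difference is cosmetic bookkeeping in how the sum is split (you work with $\sum_x P_{Z|X}(x|x)$ directly, while the paper first converts to $P_{X,Z}$); both arguments also share the same implicit assumption that $Z$ takes values in $\{0,\dots,m-1\}$, which is needed for $\sum_{y\neq x} P_{Z|X}(y|x) = 1 - P_{Z|X}(x|x)$.
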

\begin{proof}
We have that
\begin{align*}
&\Pr(Y\neq Z) - \Pr(X\neq Z) \\
&\quad \stackrel{(a)}{=} \frac{p}{m(m-1)}\mathop{\sum_{x,y} }_{x\neq y}P_{Z|X,Y}(x|x,y)-P_{Z|X,Y}(y|x,y) \\
&\quad \stackrel{(b)}{=} \frac{p}{m(m-1)}\mathop{\sum_{x,y} }_{x\neq y}P_{Z|X}(x|x)-P_{Z|X}(y|x) \\
&\quad \stackrel{(c)}{=} \frac{p}{(m-1)}\mathop{\sum_{x,y} }_{x\neq y}P_{X,Z}(x,x)-P_{X,Z}(x,y) \\
&\quad=\frac{p}{(m-1)}((m-1)\Pr(X=Z)-P(X\neq Z)) \\
&\quad= p - \Pr(X\neq Z) \left( p + \frac{p}{m-1} \right),
\end{align*}
where (a) follows from Lemma~\ref{lemma:error_relation}, (b) since $Y\to X\to Z$ forms a Markov chain, and (c) since $X$ is uniform over $\{0,\dots, m-1\}$.
Rearranging terms yields the lemma.
\end{proof}

\begin{lemma}\label{lemma_fano1}
Let $X$ be uniformly distributed on $\{0,\dots, m-1\}$ and define
\begin{align*}
f(\epsilon ):= & \inf_{P_{Z|X}} I(X;Z) \\
& \ \ \text{s.t.} \ \Pr(X\neq Z)\leq \epsilon,
\end{align*}
and
\begin{align*}
g(\epsilon) :=
\begin{cases}
r_m(\epsilon), & \text{if } \epsilon \leq 1-\frac{1}{m}\\
0, & \text{otherwise},
\end{cases}
\end{align*}
for $\epsilon\in [0,\infty)$.
Then, $f(\epsilon)$ = $g(\epsilon)$ for any $\epsilon \in[0,\infty)$, with the mechanism $P_{Z|X}$ solving the optimization problem given by
\begin{align}\label{app:test_channel}
P_{Z|X}(z|x) = 
\begin{cases}
{1-t}, \text{  if } z=x\\
\frac{t}{m-1}, \text{  otherwise,}
\end{cases}
\end{align}
where $t := \min (1-\frac{1}{m}, \epsilon)$ and $z \in \{0,\dots, m-1\}$.
\end{lemma}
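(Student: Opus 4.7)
The plan is to prove achievability and a matching converse separately. Achievability is direct: for the symmetric channel in \eqref{app:test_channel} with $t = \min(1-1/m, \epsilon)$, I will first show that $Z$ is marginally uniform on $\{0,\ldots,m-1\}$ (by symmetry together with the uniformity of $X$) so that $H(Z) = \log m$, and second that $\Pr(X\neq Z) = t \leq \epsilon$, so feasibility holds. Then $H(Z|X) = h_2(t) + t\log(m-1)$ is read off directly from \eqref{app:test_channel}, giving $I(X;Z) = \log m - h_2(t) - t\log(m-1) = r_m(t)$. When $\epsilon \geq 1 - 1/m$, we have $t = 1-1/m$ and $r_m(1-1/m) = 0$; when $\epsilon < 1 - 1/m$, we have $t = \epsilon$ and $I(X;Z) = r_m(\epsilon)$. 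Thus the symmetric channel attains $g(\epsilon)$ in both cases, so $f(\epsilon) \leq g(\epsilon)$.

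For the converse I would invoke Fano's inequality. Let $P_e := \Pr(X\neq Z)$. Since $X$ is uniform on an alphabet of size $m$, Fano gives $H(X|Z) \leq h_2(P_e) + P_e \log(m-1)$, and hence
\begin{equation*}
I(X;Z) = \log m - H(X|Z) \geq \log m - h_2(P_e) - P_e\log(m-1) = r_m(P_e).
\end{equation*}
Combined with the trivial bound $I(X;Z)\geq 0$, this yields $I(X;Z) \geq \max\{0, r_m(P_e)\}$ for any feasible mechanism.

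To convert this into a bound involving $\epsilon$ rather than $P_e$, I would establish the monotonicity of $r_m$ on $[0,1-1/m]$ by computing $r_m'(p) = -\log(m-1) + \log\frac{p}{1-p}$, which is negative precisely when $p < 1-1/m$, with $r_m(1-1/m) = 0$. Hence $r_m$ decreases from $\log m$ at $p=0$ to $0$ at $p = 1-1/m$ and then becomes negative. For $\epsilon \leq 1-1/m$, any feasible $P_e \leq \epsilon$ lies in $[0,1-1/m]$, so monotonicity yields $I(X;Z) \geq r_m(P_e) \geq r_m(\epsilon) = g(\epsilon)$. For $\epsilon > 1 - 1/m$, the trivial bound $I(X;Z)\geq 0 = g(\epsilon)$ suffices. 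In both cases $f(\epsilon) \geq g(\epsilon)$, which combined with achievability gives $f(\epsilon) = g(\epsilon)$.

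The only delicate step is ensuring Fano's inequality applies with error event $\{X\neq Z\}$, which implicitly requires that $Z$ take values in $\mathcal{X} = \{0,\ldots,m-1\}$ for the error probability to be meaningful; this is consistent with how $\Pr(X\neq Z)$ is defined throughout Sec.~\ref{sec:example}. The rest is routine: symmetry plus direct entropy computation for achievability, Fano plus a one-variable monotonicity check for the converse.
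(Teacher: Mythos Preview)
Your proposal is correct and follows essentially the same approach as the paper's proof: Fano's inequality plus monotonicity of $r_m$ on $[0,1-1/m]$ for the converse, and the symmetric channel of~\eqref{app:test_channel} for achievability. The only cosmetic differences are that the paper invokes Lemma~\ref{lemma:sp_mut_computation} (recognizing $(X,Z)\sim SP(m,t)$) rather than computing $H(Z)-H(Z|X)$ directly, and asserts the monotonicity of $r_m$ without the derivative computation you supply.
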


\begin{proof}
We immediately have that $f(\epsilon) = g(\epsilon) = 0$ for any $\epsilon > 1-\frac{1}{m}$, since for $Z$ independent of $X$ and uniformly distributed over $\{0, \ldots, m-1\}$, which is consistent with~\eqref{app:test_channel}, we have that $\Pr(X \neq Z) = 1-\frac{1}{m}$ and $I(X;Z) = 0$.
Thus, for the rest of the proof, we assume that $\epsilon \leq 1 -\frac{1}{m}$.

We first show that $f(\epsilon)\geq g(\epsilon)$, using a lower bound on $I(X;Z)$,
\begin{align}
I(X;Z) &= \log m - H(X|Z) \nonumber\\
&\geq r_m(\Pr(X\neq Z)), \label{eqn:fano-bound}
\end{align}
which follows from Fano's inequality and definition of $r_m$ from Lemma~\ref{lemma:sp_mut_computation}.
Thus, for $\epsilon \leq 1 -\frac{1}{m}$,
\begin{align*}
f(\epsilon)&\geq \inf_{P_{Z|X}} r_m(\Pr(X\neq Z)) \quad \text{s.t. }  \Pr(X\neq Z)\leq \epsilon \\
& = r_m(\epsilon) =: g(\epsilon),
\end{align*}
since $r_m(\epsilon)$ is strictly decreasing over $[0, 1-\frac{1}{m}]$.

We next show that $f(\epsilon) \leq g(\epsilon)$ for $P_{Z|X}$ given by~\eqref{app:test_channel}.
Note that $(X,Z)\sim SP(m,t)$, and hence the conditional probability $P_{Z|X}$ is in the feasible region of the optimization problem since $\Pr(X\neq Z) = t = \epsilon$.
Consequently, we have $f(\epsilon) \leq I(X;Z) = r_m(t) = g(\epsilon)$, where the first equality follows from Lemma~\ref{lemma:sp_mut_computation} and the second since $t = \epsilon \leq 1-\frac{1}{m}$.
\end{proof}

\begin{lemma}\label{fano_lower_bound2}
Let $X$ be uniformly distributed on $\{0,\dots, m-1\}$ and define
\begin{align*}
f^*(\epsilon ):= & \inf_{P_{Z|X}} I(X;Z) \\
& \ \ \text{s.t.} \ \Pr(X\neq Z) \geq \epsilon,
\end{align*}
and
\begin{align*}
g^*(\epsilon) :=
\begin{cases}
r_m(\epsilon), & \text{if } \epsilon \geq 1-\frac{1}{m}\\
0, & \text{otherwise},
\end{cases}
\end{align*}
for $\epsilon\in [0,1]$. Then, $f^*(\epsilon)$ = $g^*(\epsilon)$ for any $\epsilon \in[0,1]$,
with the mechanism $P_{Z|X}$ solving the optimization problem given by~\eqref{app:test_channel} with $t := \max (1-\frac{1}{m}, \epsilon)$.
\end{lemma}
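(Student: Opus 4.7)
The plan is to mirror the structure of the proof of Lemma~\ref{lemma_fano1}, but accounting for the fact that the feasibility constraint is reversed ($\Pr(X\neq Z)\geq \epsilon$ instead of $\leq \epsilon$). The key structural feature of $r_m$ that I would first isolate is its monotonicity: by differentiating $r_m(p) = \log m - p\log(m-1) - h_2(p)$ one sees that $r_m$ is strictly decreasing on $[0, 1-1/m]$, attains the value $0$ at $p=1-1/m$, and is strictly increasing on $[1-1/m, 1]$. This unimodal shape is what makes the two regimes in $g^*$ fall out naturally.

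I would split on the value of $\epsilon$. When $\epsilon < 1 - 1/m$, the trivial mechanism that produces $Z$ uniform on $\{0,\ldots,m-1\}$ independent of $X$ satisfies $\Pr(X\neq Z) = 1-1/m \geq \epsilon$ and achieves $I(X;Z) = 0$, so $f^*(\epsilon)\leq 0 = g^*(\epsilon)$. Since $I(X;Z)\geq 0$ trivially, this gives $f^*(\epsilon) = 0 = g^*(\epsilon)$. Note this corresponds to $t = \max(1-1/m,\epsilon) = 1-1/m$ in the test channel of~\eqref{app:test_channel}, which indeed produces $Z$ uniform and independent of $X$, matching the mechanism claimed in the statement.

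For $\epsilon \geq 1 - 1/m$, I would prove both inequalities separately. For the lower bound, I would start from Fano's inequality exactly as in~\eqref{eqn:fano-bound}, to obtain $I(X;Z) \geq r_m(\Pr(X\neq Z))$ for every $P_{Z|X}$. Consequently,
\begin{equation*}
f^*(\epsilon) \;\geq\; \inf_{P_{Z|X}:\,\Pr(X\neq Z)\geq \epsilon} r_m(\Pr(X\neq Z)) \;\geq\; \inf_{q\in[\epsilon,1]} r_m(q) \;=\; r_m(\epsilon),
\end{equation*}
where the last equality uses that on $[1-1/m, 1] \supseteq [\epsilon,1]$ the function $r_m$ is strictly increasing, so its infimum over $[\epsilon,1]$ is attained at the left endpoint $q=\epsilon$. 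This gives $f^*(\epsilon)\geq g^*(\epsilon)$.

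For the achievability direction, I would plug in the test channel~\eqref{app:test_channel} with $t := \epsilon$. A direct computation shows $(X,Z)\sim SP(m,\epsilon)$, hence $\Pr(X\neq Z)=\epsilon$, which meets the constraint $\Pr(X\neq Z)\geq \epsilon$, and by Lemma~\ref{lemma:sp_mut_computation}, $I(X;Z) = r_m(\epsilon) = g^*(\epsilon)$. Combined with the lower bound, this yields $f^*(\epsilon) = g^*(\epsilon)$ and identifies the optimal mechanism. The only subtle point, and really the only place where this proof diverges in content from that of Lemma~\ref{lemma_fano1}, is recognizing and justifying that $r_m$ is increasing on $[1-1/m,1]$ so that the Fano lower bound is tight at the boundary $q=\epsilon$ rather than at some interior point; everything else is a straightforward adaptation.
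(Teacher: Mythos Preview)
Your proposal is correct and follows essentially the same approach as the paper's proof: both handle the trivial case $\epsilon < 1-\tfrac{1}{m}$ via the uniform independent $Z$, then for $\epsilon \geq 1-\tfrac{1}{m}$ use the Fano bound~\eqref{eqn:fano-bound} together with the monotonicity of $r_m$ on $[1-\tfrac{1}{m},1]$ for the lower bound, and the symmetric test channel~\eqref{app:test_channel} with $t=\epsilon$ for achievability. The only difference is that you spell out the monotonicity of $r_m$ via differentiation, whereas the paper simply asserts it.
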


\begin{proof}
We immediately have that $f^*(\epsilon) = g^*(\epsilon) = 0$ for any $\epsilon < 1-\frac{1}{m}$, since for $Z$ independent of $X$ and uniformly distributed over $\{0, \ldots, m-1\}$, which is consistent with~\eqref{app:test_channel} with $t := \max (1-\frac{1}{m}, \epsilon)$, we have that $\Pr(X \neq Z) = 1-\frac{1}{m}$ and $I(X;Z) = 0$.
Thus, for the rest of the proof, we assume that $\epsilon \geq 1 -\frac{1}{m}$.

We first show that $f^*(\epsilon)\geq g^*(\epsilon)$, applying the lower bound of~\eqref{eqn:fano-bound} to yield
\begin{align*}
f^*(\epsilon) &\geq \inf_{P_{Z|X}} r_m(\Pr(X\neq Z)) \quad \text{s.t. }  \Pr(X\neq Z)\geq \epsilon \\
& = r_m(\epsilon) =: g^*(\epsilon),
\end{align*}
which follows since $r_m(\epsilon)$ is strictly increasing over $[1-\frac{1}{m}, 1]$.

We next show that $f^*(\epsilon) \leq g^*(\epsilon)$ for $P_{Z|X}$ given by~\eqref{app:test_channel} with $t := \max (1-\frac{1}{m}, \epsilon)$.
Note that $(X,Z)\sim SP(m,t)$, and hence the conditional probability $P_{Z|X}$ is in the feasible region of the optimization problem since $\Pr(X\neq Z) = t = \epsilon$.
Consequently, we have $f(\epsilon) \leq I(X;Z) = r_m(t) = g(\epsilon)$, where the first equality follows from Lemma~\ref{lemma:sp_mut_computation} and the second since $t = \epsilon \geq 1-\frac{1}{m}$.
\end{proof}

\section{Proof of Theorem~\ref{thm:xy_available}}\label{app:proof_xy_available}

For convenience, we define
\begin{align*}
g_\text{FD}(\delta) :=
\begin{cases}
r_m(p + \delta), & \text{if } \delta \leq 1-\frac{1}{m} - p, \\
r_m(p - \delta), & \text{if } \delta \leq p - (1 - \frac{1}{m}), \\
0, & \text{otherwise}.
\end{cases}
\end{align*}
which is equal to the right-hand side of~\eqref{eqn:SP_FD_region}.
Since, for $\delta \geq 1 - \frac{1}{m}$, we immediately have $g_\text{FD}(\delta) = \pi_\text{FD}(\delta) = 0$, we will assume that $\delta < 1 - \frac{1}{m}$ for the rest of this proof.

We divide the proof into two cases: \textbf{(i)} $p \leq 1-\frac{1}{m}$ and \textbf{(ii)} $p > 1-\frac{1}{m}$. \\

\textbf{Case 1}: $p\leq 1-\frac{1}{m}$\\

We first show that $\pi_\text{FD}(\delta) \geq g_\text{FD}(\delta)$.
Due to Lemma~\ref{lemma:error_relation}, we have that
$\Pr(Y \neq Z) \leq \delta$ implies that
\begin{align*}
& \Pr(X\neq Z) \leq \delta \\
&\quad\quad +\frac{p}{m(m-1)}\mathop{\sum_{x,y} }_{x\neq y} \left[ P_{Z|X,Y}(y|x,y) - P_{Z|X,Y}(x|x,y) \right],\\
&\quad\leq \delta+ \frac{p}{m(m-1)}\mathop{\sum_{x,y} }_{x\neq y}1\\
&\quad\leq \delta + p,
\end{align*}
Thus, for any mechanism $P_{Z|X,Y}$ with $\Pr(Y \neq Z) \leq \delta $, we have that $\Pr(X \neq Z) \leq \delta +p$.
Then, we bound $\pi_\text{FD}(\delta)$ via 
\begin{align*}
\pi_\text{FD}(\delta) &\geq \inf_{P_{Z|X,Y}} I(X;Z) \quad \text{s.t. }  \Pr(X\neq Z)\leq \delta+p \\
&= \inf_{P_{Z|X}} I(X;Z) \quad \text{s.t. }  \Pr(X\neq Z)\leq \delta+p\\
&= g_\text{FD}(\delta),
\end{align*}
where the last equality follows from Lemma~\ref{lemma_fano1}.

We next show that $\pi_\text{FD}(\delta) \leq g_\text{FD}(\delta)$
via the mechanism given by~\eqref{remark:general_case_solving_rv1}, which is feasible since
\begin{align*}
\Pr(Y \neq Z) &= \Pr(Y \neq Z | X = Y) \Pr(X = Y)\\
&=\Pr(N \neq 0 | X = Y) (1-p)\\
&= t \leq \delta.
\end{align*}
Hence, we have that $\pi_\text{FD}(\delta) \leq I(X;Z)$.
For all $x \neq z$, we have that
\begin{align*}
P_{X,Z}(x,z) &= \sum_y P_{Z|X,Y}(z|x,y) P_{X,Y}(x,y) \\
&= P_{Z|X,Y}(z|x,x) P_{X,Y}(x,x) \\
&\quad + \sum_{y \neq x} P_{Z|X,Y}(z|x,y) P_{X,Y}(x,y) \\
&= \frac{t}{(1-p)(m-1)} \frac{1-p}{m} + P_{X,Y}(x,z) \\
&= \frac{t}{(m-1)m} + \frac{p}{(m-1)m} \\
&= \frac{t+p}{(m-1)m},
\end{align*}
which shows that $(X,Z)\sim SP(m, t+p)$.
Thus, by Lemma~\ref{lemma:sp_mut_computation}, $I(X;Z) = r_m(t+p)$.
Noting that $r_m(1-\frac{1}{m}) = 0$, we have $r_m(t+p) = g_\text{FD}(\delta)$ for $p \leq 1-\frac{1}{m}$.
Hence, $\pi_\text{FD}(\delta) \leq g_\text{FD}(\delta)$. \\

\textbf{Case 2}: $p > 1-\frac{1}{m}$\\

We first show that $\pi_\text{FD}(\delta) \geq g_\text{FD}(\delta)$.
Given $\Pr(Y \neq Z) \leq \delta$, we have that
\begin{align*}
\Pr(X\neq Z) + \delta &\geq \Pr(X\neq Z)+\Pr(Y\neq Z) \\
&\geq \Pr(\{X\neq Z\}\cup \{Y \neq Z\}) \\
&=1 - \Pr(X=Y=Z) \\
&\geq 1- \Pr(X=Y)\\
&= p.
\end{align*}
Thus, for any mechanism $P_{Z|X,Y}$ that satisfies $\Pr(Y\neq Z)\leq \delta $, we also have that $\Pr(X\neq Z) \geq p-\delta$.
Then, we can bound $\pi_\text{FD}(\delta)$ via
\begin{align*}
\pi_\text{FD}(\delta) &\geq \inf_{P_{Z|X,Y}} I(X;Z) \quad \text{s.t. }  \Pr(X \neq Z) \geq p-\delta\\
&= \inf_{P_{Z|X}} I(X;Z) \quad \text{s.t. } \Pr(X \neq Z) \geq p-\delta \\
&= g_\text{FD}(\delta),
\end{align*}
where the last equality follows from Lemma~\ref{fano_lower_bound2}.

We next show $\pi_\text{FD}(\delta) \leq g_\text{FD}(\delta)$, by considering the mechanism defined by
\begin{align*}
Z :=
\begin{cases}
Y,& \text{if } \theta'=1,\\
X,& \text{otherwise},
\end{cases}
\end{align*}
where $\theta'$ is a binary random variable that is independent of $(X,Y)$, with $\Pr(\theta' = 1) = t'/p$, where we define $t':= \max(p-\delta, 1-\frac{1}{m})$ for convenience.
Since
\begin{align*}
\Pr(Y\neq Z)&=\Pr(Y\neq Z | \theta'=0)P_{\theta'}(0)\\
&=\Pr(Y\neq X | \theta'=0)P_{\theta'}(0)\\
&=\Pr(Y\neq X)P_{\theta'}(0)\\
&=p-t'\\
&\leq \delta,
\end{align*}
we have that this mechanism is feasible.
Hence, we have $\pi_\text{FD}(\delta) \leq I(X;Z)$.
For all $x \neq z$, we have that
\begin{align*}
P_{X,Z}(x,z) &= \sum_y P_{Z|X,Y}(z|x,y) P_{X,Y}(x,y) \\
&= P_{Z|X,Y}(z|x,x) P_{X,Y}(x,x) \\
&\quad + \sum_{y\neq x} P_{Z|X,Y}(z|x,y) P_{X,Y}(x,y) \\
&= 0 + P_{Z|X,Y}(z|x,z) P_{X,Y}(x,z) + 0 \\
&= \frac{t'}{p} \frac{p}{(m-1)m} \\
&= \frac{t'}{(m-1)m},
\end{align*}
which shows that $(X,Z)\sim SP(m, t')$.
Thus, by Lemma~\ref{lemma:sp_mut_computation}, $I(X;Z)=r_m(t')$.
Noting that $r_m(1-\frac{1}{m}) = 0$, we have $r_m(t') = g_\text{FD}(\delta)$ for all $p \geq 1-\frac{1}{m}$.
Hence, $\pi_\text{FD}(\delta) \leq g_\text{FD}(\delta)$.

\section{Proof of Theorem~\ref{thm:y_available}}\label{app:proof_y_available}

For convenience, we define 
\begin{align*}
g_\text{OP}(\delta) :=
\begin{cases}
r_m \left( p + \delta \left(1 - \frac{pm}{m-1} \right) \right), & \text{if } \delta < 1-\frac{1}{m},\\
0, & \text{otherwise}.
\end{cases}
\end{align*}
which is equal to the right-hand side of~\eqref{eqn:SP_OP_region}.
Since, for $\delta \geq 1 - \frac{1}{m}$, we immediately have $g_\text{OP}(\delta) = \pi_\text{OP}(\delta) = 0$, we will assume that $\delta < 1 - \frac{1}{m}$ for the rest of this proof.

We first show that $\pi_\text{OP}(\delta) \geq g_\text{OP}(\delta)$.
Since $X \to Y \to Z$ forms a Markov chain for any output perturbation mechanism, we have from Lemma~\ref{lemma:error_relation_markov} that
\[
\Pr(X \neq Z) = p + \Pr(Y \neq Z) \left(1-\frac{pm}{m-1}\right).
\]
Let $\delta' := p+\delta\left(1-\frac{pm}{m-1}\right)$.
Note that when $p\leq 1-\frac{1}{m}$, the term $\left(1-\frac{pm}{m-1}\right) \geq 0$.
Hence, the constraint $\Pr(Y \neq Z) \leq \delta$ is equivalent to $\Pr(X \neq Z) \leq \delta'$, and $\delta' < 1-\frac{1}{m}$ since $\delta < 1-\frac{1}{m}$.
Thus, for $p\leq 1-\frac{1}{m}$, we can bound $\pi_\text{OP}(\delta)$ via
\begin{align*}
\pi_\text{OP}(\delta) &= \inf_{P_{Z|Y}} I(X;Z) \quad \text{s.t. } \Pr(X \neq Z) \leq \delta' \\
&\geq \inf_{P_{Z|X}} I(X;Z) \quad \text{s.t. } \Pr(X \neq Z) \leq \delta'\\
&=g_\text{OP}(\delta),
\end{align*}
where the inequality is due to the removal of the Markov chain constraint and the final equality follows from Lemma~\ref{lemma_fano1}.
The case when $p > 1-\frac{1}{m}$ follows similarly, except now the term $\left(1-\frac{pm}{m-1}\right) < 0$, hence the constraint $\Pr(Y \neq Z) \leq \delta$ is equivalent to $\Pr(X \neq Z) \geq \delta'$, and $\delta' > 1-\frac{1}{m}$.
Thus, for $p > 1-\frac{1}{m}$, we can bound $\pi_\text{OP}(\delta)$ via
\begin{align*}
\pi_\text{OP}(\delta) &= \inf_{P_{Z|Y}} I(X;Z) \quad \text{s.t. } \Pr(X \neq Z) \geq \delta' \\
&\geq \inf_{P_{Z|X}} I(X;Z) \quad \text{s.t. } \Pr(X \neq Z) \geq \delta'\\
&=g_\text{OP}(\delta),
\end{align*}
where the inequality is due to the removal of the Markov chain constraint and the final equality follows from Lemma~\ref{fano_lower_bound2}.

We next show that $\pi_\text{OP}(\delta) \leq g_\text{OP}(\delta)$, via the mechanism given by $Z := Y + N \mod m$, where $N$ is independent of $(X,Y)$, and distributed according to~\eqref{eqn:OP_opt_mech}.
This mechanism is feasible since $\Pr(Y \neq Z) = t := \min(\delta, 1-\frac{1}{m}) \leq \delta$.
Hence, we have $\pi_\text{OP}(\delta) \leq I(X;Z)$.
For all $x \neq z$, we have that
\begin{align*}
&P_{X,Z}(x,z) = \sum_y P_{Z|Y}(z|y) P_{X,Y}(x,y) \\
&\quad= P_{Z|Y}(z|z) P_{X,Y}(x,z) + \sum_{y \neq z} P_{Z|Y}(z|y) P_{X,Y}(x,y) \\
&\quad= (1-t) P_{X,Y}(x,z) \\
&\quad\quad+ \frac{t}{(m-1)} \left( P_{X,Y}(x,x) + \sum_{y\notin \{x,z\}}P_{X,Y}(x,y) \right) \\
&\quad= \frac{(1-t)p}{(m-1)m} + \frac{t}{(m-1)} \left( \frac{(1-p)}{m} + \frac{p(m-2)}{(m-1)m} \right) \\
&\quad= \frac{p + t \left(1-\frac{pm}{m-1}\right)}{(m-1)m} \\
&\quad=: \frac{\delta''}{(m-1)m} \\
\end{align*}
which shows that $(X,Z) \sim SP(m, \delta'')$.
Thus, by Lemma~\ref{lemma:sp_mut_computation}, $I(X;Z)=r_m(\delta'')$.
Noting that $r_m(1-\frac{1}{m}) = 0$, we have $r_m(\delta'') = g_\text{OP}(\delta)$.
Hence, $\pi_\text{OP}(\delta) \leq g_\text{OP}(\delta)$.

\section{Proof of Theorem~\ref{thm:x_available}}\label{app:proof_x_available}

For convenience, we define 
\begin{align*}
g_\text{INF}(\delta):=
\begin{cases}
r_m(t), & \text{if }
\delta < 1-\frac{1}{m} \text{ and } p \notin(\delta,h), \\
\infty, & \text{if } 
\delta < 1-\frac{1}{m} \text{ and } p \in(\delta,h), \\
0, & \text{if } \delta \geq 1-\frac{1}{m},
\end{cases}
\end{align*}
which is equal to the right-hand side of~\eqref{eqn:SP_INF_region}, where $h := (m-1)(1-\delta)$ and
\[
t:= \frac{\delta-p}{1-\frac{pm}{m-1}}.
\]
Since, for $\delta \geq 1 - \frac{1}{m}$, we immediately have $g_\text{INF}(\delta) = \pi_\text{INF}(\delta) = 0$, we will assume that $\delta < 1 - \frac{1}{m}$ for the rest of this proof.
Note that with this assumption, we have $h > 1 - \frac{1}{m}$.

Since $Y \to X \to Z$ forms a Markov chain for any inference mechanism, we have from Lemma~\ref{lemma:error_relation_markov} that
\[
\Pr(Y \neq Z) = p + \Pr(X \neq Z) \left(1-\frac{pm}{m-1}\right).
\]
Note that if $p = 1-\frac{1}{m}$, then $\Pr(Y \neq Z) = 1-\frac{1}{m} > \delta$, and the optimization is infeasible, hence $g_\text{INF}(\delta) = \pi_\text{INF}(\delta) = \infty$.
Thus, we will consider the two remaining cases: \textbf{(i)} $p < 1-\frac{1}{m}$ and \textbf{(ii)} $p > 1-\frac{1}{m}$. \\

\textbf{Case 1}: $p < 1-\frac{1}{m}$\\

In this case, we have that the constraint $\Pr(Y\neq Z)\leq \delta$ is equivalent to
\[
\Pr(X \neq Z) \leq \frac{\delta-p}{1-\frac{mp}{m-1}} =: t,
\]
due to Lemma~\ref{lemma:error_relation_markov}.
For $p > \delta$, the optimization problem is infeasible since $t < 0$, and hence $g_\text{INF}(\delta) = \pi_\text{INF}(\delta) = \infty$.
Otherwise, for $p \leq \delta$, we have that $0 \leq t \leq 1-\frac{1}{m}$, and by Lemma~\ref{lemma_fano1}, we have that $g_\text{INF}(\delta) = \pi_\text{INF}(\delta) = r_m(t)$.

\textbf{Case 2}: $p > 1-\frac{1}{m}$\\

In this case, we have that the constraint $\Pr(Y\neq Z)\leq \delta$ is equivalent to
\[
\Pr(X \neq Z) \geq \frac{\delta-p}{1-\frac{pm}{m-1}} =: t,
\]
due to Lemma~\ref{lemma:error_relation_markov} and since the denominator is negative.
For $p < h$, the optimization problem is infeasible since $t > 1$, and hence $g_\text{INF}(\delta) = \pi_\text{INF}(\delta) = \infty$.
Otherwise, for $p \geq h$, we have that $1-\frac{1}{m} \leq t \leq 1$, and by Lemma~\ref{fano_lower_bound2}, we have that $g_\text{INF}(\delta) = \pi_\text{INF}(\delta) = r_m(t)$.

\end{document}